%% file: main.tex
\documentclass[runningheads,orivec]{llncs}

\usepackage[T1]{fontenc}
\usepackage{graphicx}
\graphicspath{{figs/}}
\usepackage{amsmath,amssymb}
\usepackage{booktabs}
\usepackage[table]{xcolor}
\PassOptionsToPackage{hyphens}{url}
\usepackage[hidelinks]{hyperref}
\usepackage{bm}
\usepackage{orcidlink}
\newcommand{\best}[1]{\textbf{\boldmath #1}}
\newcommand{\snd}[1]{\underline{#1}}
\definecolor{heroblue}{HTML}{4C72B0}    % figure hero colour (our method)
\definecolor{ourstint}{HTML}{E8EEF6}    % light tint for the "ours" table row
\definecolor{cmarkgreen}{HTML}{55A868}  % figure green for a "yes" tick
\newcommand{\ours}{\textcolor{heroblue}{\textbf{(Ours)}}}
\newcommand{\cmark}{\textcolor{cmarkgreen}{\checkmark}}

\begin{document}

\title{Manipulation-Resistant Oracles via Continuous Peer Prediction}
\titlerunning{Manipulation-Resistant Oracles via Continuous Peer Prediction}

% == authors ==
\author{
Junhao Song$^\dagger$\inst{1}\orcidlink{0009-0009-5178-0258} \and
Tanis Rolandi\inst{2}\orcidlink{0009-0006-0055-4267} \and
Adam Novocky\inst{1}\orcidlink{0009-0007-5015-244X} \and
\\
Enrico Biffis$^\dagger$\inst{3,4}\orcidlink{0000-0001-5984-0942} \and
William Knottenbelt$^\dagger$\inst{1}\orcidlink{0000-0002-8490-1011}
}

\institute{
Department of Computing, Imperial College London, London, UK \and
European Research and Development Centre, Hitachi Europe Ltd., London, UK \and
Department of Finance, Business School, Imperial College London, London, UK \and
Centre for Climate Finance and Investment, Imperial College London, London, UK\\
\email{\{junhao.song23,adam.novocky26,e.biffis,w.knottenbelt\}@imperial.ac.uk},
\email{tanis.rolandi@hitachi-eu.com}\\
$\dagger$ Corresponding authors}

\authorrunning{J. Song et al.}

\maketitle

\begin{abstract}
Blockchain oracle networks, including decentralised physical infrastructure networks (DePINs), report physical measurements from a source that can manipulate the underlying process. A carbon project operator, for example, can dry a site before measurement and be paid for soil organic carbon (SOC) that was never stored. Peer prediction covers discrete signals under a strategic source, or continuous reports under an honest source, but not both at once. We close that gap for Gaussian reports. Strict propriety of the continuous ranked probability score (CRPS) settles the reporting stage alone. Our result concerns the source stage. A report-only settlement rule whose honest-source payoff is non-decreasing in the true value identifies and deters source manipulation if and only if the sensors respond to that manipulation by different amounts. When they respond identically, manipulation is observationally equivalent to a genuine rise in SOC, and CRPS translation equivariance leaves every peer score unchanged, so no such rule deters the source. Necessity follows from invariance rather than from the signal enumeration available in the discrete case. The source-stage estimator is ordinary weighted least squares, and the contribution is the characterisation around it. With eight calibrated SOC models on $37{,}280$ real European soil spectra, shift-naive aggregators over-credit by up to $57\%$ under a calibrated moisture perturbation while the joint estimate stays near-unbiased. The guarantee holds on the equilibrium path for non-colluding reporters and inherits the models' own calibration error. The project page is at \url{https://rqcker.github.io/crps-wls}.

\keywords{Peer Prediction \and Mechanism Design \and Source Manipulation \and Blockchain Oracles \and Carbon Markets \and Stochastic Modelling.}
\end{abstract}

\input{1_intro}
\input{2_related}
\input{3_model}
\input{4_method}
\input{5_experiments}
\input{6_conclusion}

\section*{Acknowledgements}
This project was funded by the Hitachi-Imperial Centre for Decarbonisation and Natural
Climate Solutions, a collaboration between Hitachi Ltd, Hitachi Europe and Imperial
College London, which fully funds Junhao Song's PhD studentship. We thank the Imperial
College Centre for Cryptocurrency Research and Engineering and the Brevan Howard Centre
for Financial Analysis for their support. Adam Novocky contributed to this work as a
PhD intern at Imperial College London.

\bibliographystyle{splncs04}
\bibliography{refs}

% Appendices are not permitted in the MARBLE 2026 proceedings version.
% The extended version restores them with:
\appendix \input{7_appendix}

\end{document}

%% file: 1_intro.tex
\section{Introduction}
\label{sec:intro}

Carbon credit markets pay landowners for storing carbon in soil, and the payment
depends on a \textbf{\underline{S}}oil \textbf{\underline{O}}rganic \textbf{\underline{C}}arbon (\textbf{SOC}) measurement that the buyer cannot
independently check \cite{oldfield2022crediting}.
Soil moisture changes the water absorption bands of a spectrum and biases the
predicted SOC \cite{minasny2011removing}.
Seven of the eight models calibrated in Section \ref{sec:experiments} read higher on
drier spectra, so an operator who dries the site before measurement is paid for
carbon that is not there.
The party that controls the measurement site is the source of the signal, with both
the means and the motive to manipulate it (Figure \ref{fig:teaser}).

\begin{figure}[t]
  \centering
  \includegraphics[width=\linewidth]{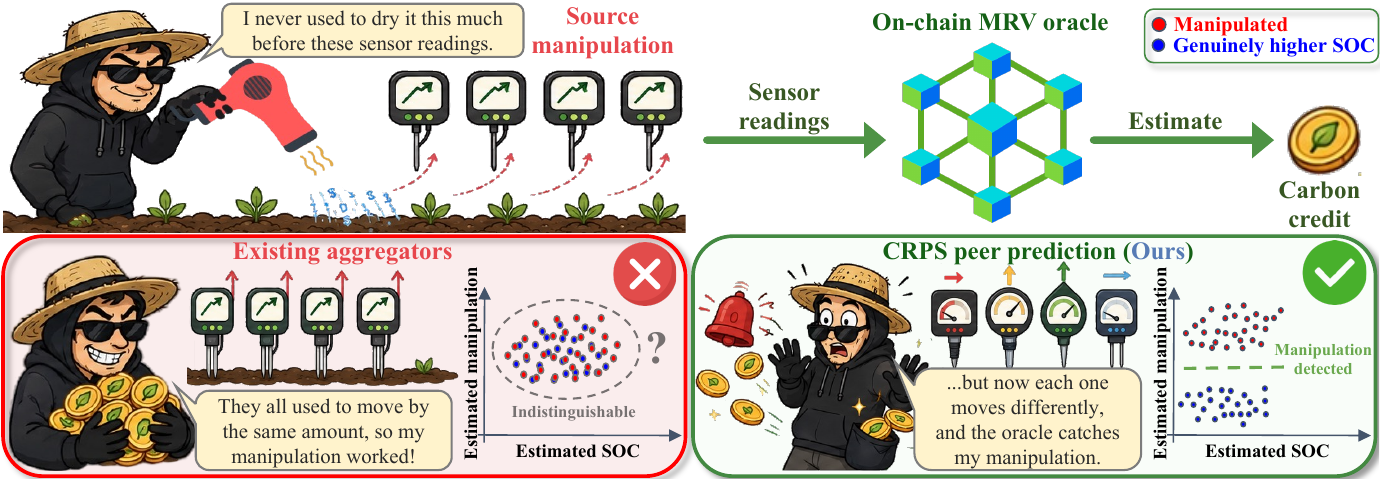}
  \caption{\textbf{Sensor diversity separates manipulation from real carbon.}
  Drying the site shifts every sensor's reading. A uniform shift is indistinguishable from
  genuinely higher carbon (\textcolor[HTML]{FF0000}{\textbf{left}}), but when sensor
  types respond by different amounts the pattern of readings reveals the manipulation
  (\textcolor[HTML]{438941}{\textbf{right}}).}
  \label{fig:teaser}
\end{figure}

This failure is an instance of a broader one. Oracle manipulation has drained hundreds
of millions from decentralised finance, including over \$110 million from Mango Markets
in 2022 \cite{cftc2023eisenberg}. Those attacks corrupt a reported value. Source
manipulation works one step earlier: it changes the physical quantity before
measurement, so the sensor models report honestly and the oracle relays a true reading
of an engineered reality.

Blockchain settles the carbon payment trustlessly but cannot see the soil.
A \textbf{\underline{M}}easurement, \textbf{\underline{R}}eporting, and \textbf{\underline{V}}erification (\textbf{MRV}) oracle brings the off-chain reading
on-chain, and \textbf{\underline{De}}centralised \textbf{\underline{P}}hysical \textbf{\underline{I}}nfrastructure \textbf{\underline{N}}etworks (\textbf{DePINs}) such as Open
Forest Protocol (OFP) \cite{ofp2024}, WeatherXM \cite{weatherxm2023}, and Boomitra
\cite{boomitra2025} already run them. Each aggregates readings by majority vote,
self-consistency, or a centralised model, and none scores the source for honesty. A
source that shifts every sensor by the same amount defeats them all, since a uniform
shift is indistinguishable from genuinely higher carbon.

Four roles run through the paper. The \emph{source} owns the site and chooses the
physical intervention. A \emph{sensor model} turns the resulting measurement into a
calibrated Gaussian report about the SOC value. A \emph{reporter} submits that report
and collects a peer payment. The \emph{settlement rule} reads the submitted reports,
issues credits, and slashes the source collateral. The source and the reporters are
the strategic parties; a sensor model is the calibrated pipeline a reporter runs.

Peer prediction scores an agent by how well its report predicts a peer's, so truthful
reporting is an equilibrium with no ground truth \cite{miller2005eliciting}.
Two lines of work approach our setting but do not meet in it.
Milionis et al. \cite{milionis2025incentive} add a manipulating source, but only for
discrete signals.
Richardson and Faltings \cite{richardson2024peer} handle the continuous distributional
reports a calibrated model produces, but assume an honest source.
SOC measurement needs both at once.
Milionis et al. note that their discrete result can be generalised to continuous
signals, and we carry out that extension for the Gaussian case.

Strict propriety of the \textbf{\underline{C}}ontinuous \textbf{\underline{R}}anked \textbf{\underline{P}}robability \textbf{\underline{S}}core (\textbf{CRPS})
\cite{gneiting2007strictly} settles the reporting stage. A reporter scored on how well
its distribution predicts a peer's report can do no better than state that distribution
truthfully. Propriety says nothing about whether the source action is visible in the
report law. That question is the source stage. Manipulation enters there as a nuisance
direction, and the joint estimate of the SOC value and the manipulation is ordinary
weighted least squares. The contribution is the characterisation around that estimate.
Identification and deterrence by report-only settlement rules hold exactly when the
sensitivity vector is transverse to the all-ones direction. The necessity argument
cannot reuse the discrete proof, which enumerates a finite signal alphabet. It rests
instead on an invariance. CRPS is translation equivariant, so a common shift of every
report leaves all peer scores unchanged. Uniform sensitivities then make manipulation
observationally equivalent to a genuine rise in SOC. We call the transversality
requirement the \textbf{SOC Sensor Diversity Condition} and show that it is the exact
dividing line. Diversity is a requirement rather than a modelling convenience, and it
is testable, since different model families respond differently to the same physical
change. The condition is stated here for one manipulation direction. An adversary free
to choose the intervention needs diversity against every feasible direction, the
multi-mode condition of Section \ref{sec:method}.

\medskip\noindent Our contributions are as follows:
\begin{itemize}
  \item We give a peer prediction mechanism for continuous Gaussian reports under an
  adversarial source, the case left open between discrete strategic-source and
  continuous honest-source peer prediction.
  \item We give an if-and-only-if characterisation (Theorem \ref{thm:main}): source
  manipulation is identifiable, and deterrable by report-only rules whose
  honest-source payoff is non-decreasing in the true value, exactly when the sensors
  are diverse. When diversity fails, the translation equivariance of CRPS makes
  manipulation indistinguishable from genuinely higher carbon, and no such rule
  deters it.
  \item On $37{,}280$ real European soil spectra with eight calibrated SOC models,
  shift-naive aggregators over-credit by up to $57\%$ under the calibrated moisture
  perturbation, while the joint estimate stays near-unbiased and about seven times
  less sensitive. We report the price too: a higher honest-case error floor, modest
  detection power, and no paired wet-and-dry validation.
\end{itemize}

%% file: 2_related.tex
\section{Related Work}
\label{sec:related}

\noindent\textbf{Peer prediction.}
Peer prediction mechanisms elicit truthful reports from agents who observe
correlated private signals without access to a ground truth.
Miller et al. \cite{miller2005eliciting} introduced the peer prediction method, which
scores an agent by how well its report predicts a peer's signal under a common prior.
It makes truthful reporting an equilibrium, though not the only one, and later work
strengthens the guarantee.
Schoenebeck and Yu \cite{schoenebeck2023two} propose the Source and Target
\textbf{\underline{D}}ifferential \textbf{\underline{P}}eer \textbf{\underline{P}}rediction (\textbf{DPP}) mechanisms, strongly truthful
for three agents on a single question, with prior-free guarantees but finite signal
reports and an honest source.
Zhao et al. \cite{zhao2024takes} apply peer prediction to the blockchain
verifier's dilemma for binary signals, and Wang et al. \cite{wang2021forecast}
use it for forecast aggregation. Neither models a manipulating source.
Zhang et al. \cite{zhang2025stochastically} propose stochastic dominance
truthfulness, a stronger criterion, but establish it for binary signals only.
Multi-task peer prediction attains informed truthfulness that resists uninformed
and colluding strategies \cite{dasgupta2013crowdsourced,shnayder2016informed},
but a single physical site offers only one task.
None of these works address continuous distributional reports or source
manipulation of a physical quantity.

\noindent\textbf{Continuous and adversarial peer prediction.}
Two lines come closest to our setting but stop short of it.
Richardson and Faltings \cite{richardson2024peer} introduce peer neighbourhood
mechanisms, which extend peer prediction to arbitrary signal distributions. They
achieve strict Bayes--Nash equilibria for continuous and distributional reports,
including the Gaussian reports that calibrated SOC models produce, but they assume
an honest source.
Frongillo et al. \cite{frongillo2025binary} study agents who hold more private
signals than they report, yet restrict reported outputs to binary values.
Milionis et al. \cite{milionis2025incentive} add a manipulating source and extend the
Source-DPP framework to DePIN settings. Their source identifiability condition is
necessary and sufficient for a strictly truthful mechanism, but the signal model is
discrete and the proof enumerates a finite alphabet. They note that the result can be
generalised to continuous signals, and we carry out that extension for Gaussian
signals, replacing the enumeration by the translation equivariance of CRPS
\cite{gneiting2007strictly}. That score sits within the broader theory of property
elicitation \cite{lambert2008eliciting}, and Neyman and Roughgarden
\cite{neyman2023proper} show it yields minimax-optimal forecast aggregation under
truthful reporting. Strict propriety governs the reporting stage in all of this
work. What is open is whether the source action is visible in the report law at all.

\noindent\textbf{Robust and Byzantine-resilient aggregation.}
A separate line defends aggregation against corrupted inputs without modelling
incentives. Robust statistics replace the mean with high-breakdown estimators
\cite{huber2025robust}, and Byzantine-resilient rules such as Krum
\cite{blanchard2017machine} and Bulyan \cite{guerraoui2018hidden} filter outlying
updates. These assume a corrupted minority and a clean majority, so they suppress a few
deviating sensors. Source manipulation breaks that premise: a common shift moves every
sensor at once and no clean majority remains.

\begin{table}[t]
\centering
\caption{\textbf{Related work compared on the five properties used here.} Signal
type, report type, source-manipulation modelling, scoring
rule and deployment context. CRPS denotes the continuous ranked probability score.}
\label{tab:related}
\scriptsize
\setlength{\tabcolsep}{5pt}\renewcommand{\arraystretch}{1.15}
\begin{tabular}{@{}lccccc@{}}
\toprule
Method & Continuous & Distributional & Source & CRPS & Blockchain \\
\midrule
Miller et al. \cite{miller2005eliciting}          & --  & --  & --  & --  & --  \\
Schoenebeck \& Yu \cite{schoenebeck2023two}  & --  & --  & --  & -- & -- \\
Richardson \& Faltings \cite{richardson2024peer} & \cmark & \cmark & -- & -- & -- \\
Milionis et al. \cite{milionis2025incentive} & --  & --  & \cmark & -- & \cmark \\
Neyman \& Roughgarden \cite{neyman2023proper}& \cmark & \cmark & -- & \cmark & -- \\
Frongillo et al. \cite{frongillo2025binary}    & \cmark & --  & --  & --  & -- \\
Deployed DePIN oracles \cite{ofp2024,weatherxm2023,boomitra2025} & --  & --  & --  & --  & \cmark \\
\midrule
\rowcolor{ourstint}
\textbf{CRPS-WLS Oracle} \ours                  & \cmark & \cmark & \cmark & \cmark & \cmark \\
\bottomrule
\end{tabular}
\end{table}

\noindent\textbf{Soil spectroscopy and moisture compensation.}
Soil visible-to-near-infrared spectroscopy is an established low-cost proxy for SOC
whose accuracy depends on calibration design and acquisition conditions
\cite{stenberg2010visible}. Moisture is a documented nuisance there, and a line of work
removes it before prediction: external parameter orthogonalisation
\cite{minasny2011removing,wijewardane2016moisture} and direct standardisation
\cite{roudier2017evaluation} project the water response out of the spectrum, and a
trusted moisture reading can enter the calibration as a covariate. These methods act
upstream of the oracle and need reference data, paired wet and dry spectra, or a
moisture sensor the source does not control. They reduce a known measurement nuisance.
They do not score a party free to choose the measurement condition. The two layers are
complementary. Better moisture compensation feeds better calibrated reports into
the mechanism studied here.

\noindent\textbf{Blockchain oracles for carbon MRV.}
As attention to the circular economy and the low-carbon transition grows
\cite{song2024exploring}, on-chain carbon verification needs an oracle that is
both accurate and manipulation-resistant. Open Forest Protocol \cite{ofp2024} votes on a reported
value through multi-validator output agreement, WeatherXM \cite{weatherxm2023}
scores sensors by self-consistency (its peer-comparison check is planned but not
deployed), and Boomitra \cite{boomitra2025} issues credits from a centralised
model without incentive guarantees. Financial oracle designs harden price feeds with gas-efficient or volume-weighted
medians \cite{bai2024ormer,cfbenchmarks2023}, but neither elicits honest reports from a
strategic source, and industry guidance treats oracle security as an economic
cost-of-attack condition rather than an absolute guarantee \cite{chainscore2025myth}.
Cong et al. \cite{cong2025primer} frame an oracle trilemma among scalability,
decentralisation, and truthfulness, and the present mechanism targets its truthfulness
corner under the SOC Sensor Diversity Condition. Table \ref{tab:related} positions it
against the closest prior results.
The settlement layer here is a closed-form rule, not a systems contribution. It does
not establish that a sample came from the claimed parcel, enforce the measurement
protocol, or govern how the sensitivities and noise levels are calibrated. We
report no contract implementation or gas cost.

%% file: 3_model.tex
\section{System Model}
\label{sec:model}

\subsection{Sensor Modelling}
Each of $N \ge 2$ sensor models estimates the SOC of a common site, a latent
scalar $\theta \in \mathbb{R}$ (in g/kg).
We use the flat-prior Gaussian location model as the baseline because it gives the
cleanest reporting rule, $m_i=Y_i$. Section \ref{sec:invariance} gives the
proper-prior variant, under which identification and deterrence remain exact.

A self-interested source, the party that owns the site and earns carbon credits,
chooses a manipulation action $s \in \mathcal{A}_S = \mathbb{R}_{\ge 0}$ before
measurement, a physical intervention such as drying the soil, with $s = 0$ the
honest source.
The scalar action $s$ represents a fixed, calibrated manipulation direction. If the
source can choose among several physical interventions, the model gains one
sensitivity column per intervention and the relevant condition becomes the
multi-mode rank condition stated in Section \ref{sec:method}. The one-dimensional
model below should therefore be read as local to the calibrated wetting and drying
direction used in the experiments.
Because $s$ acts on the site, it reaches every sensor model through the same physical
event, which is what makes it a single column of the design below. Tampering with
individual reports is a different channel: a free per-report shift $s_i$ adds one
unknown per report and is not identifiable from one round of $N$ reports without
further structure, so it lies outside the guarantee (Section \ref{sec:security}).

Each sensor model is a calibrated machine learning predictor, and a reporter
submits its output.
Sensor model $i$ observes:
\begin{equation}
  Y_i \;=\; \theta + \alpha_i\, s + \varepsilon_i,
  \qquad \varepsilon_i \sim \mathcal{N}(0,\sigma_i^2),
  \label{eq:obs}
\end{equation}
where $\sigma_i > 0$ is the model's calibrated noise level.
For the Gaussian likelihood we take $\boldsymbol\varepsilon\sim\mathcal{N}(0,\Sigma)$
with $\Sigma_{ii}=\sigma_i^2$. The errors need not be independent, since the models
read a common spectrum, but the displayed peer score uses the diagonal case for
clarity. A general covariance is algebraically identical for identification, which
depends only on the mean map $(\theta,s)\mapsto\theta\mathbf{1}+s\boldsymbol{\alpha}$,
and affects only precision, detection power, and the stake.
The coefficient $\alpha_i > 0$ is the \textbf{manipulation sensitivity} of sensor
model $i$, and different architectures aggregate spectral information differently, so
the $\alpha_i$ need not coincide.
We collect the parameters into vectors
$\boldsymbol{\alpha} = (\alpha_1,\dots,\alpha_N)^\top$ and
$\boldsymbol{\sigma} = (\sigma_1,\dots,\sigma_N)^\top$, both common knowledge.
The values of $\boldsymbol{\alpha}$ and $\boldsymbol{\sigma}$ are calibration inputs
to the settlement rule, not strategic reports. They are public by design, and the
estimator and the penalty are built for those public values.
Misspecification of $\boldsymbol{\alpha}$ is treated separately in
Section \ref{sec:method}.

Each reporter states a Gaussian belief about $\theta$, the calibrated output of the
models we consider \cite{richardson2024peer}: a pair $R_i = (m_i, v_i)$ encoding
$\mathcal{N}(m_i, v_i)$.
In the diffuse-prior limit reporter $i$'s posterior for
$\theta$ given $Y_i$ is $\mathcal{N}(Y_i, \sigma_i^2)$, so truthful reporting is
$R_i = (Y_i, \sigma_i^2)$, and then:
\begin{equation}
  \mathbb{E}[m_i \mid \theta, s] \;=\; \theta + \alpha_i\, s .
  \label{eq:truthmean}
\end{equation}
Under a proper common Gaussian prior the truthful report is the posterior belief
instead, and Section \ref{sec:invariance} gives the fixed de-shrinking map that
recovers an unbiased signal proxy before aggregation. We write $R=(R_1,\dots,R_N)$ for
the report profile and $L_{\theta,s}$ for the joint law of $R$ at true value $\theta$
and source action $s$ under truthful reporting.

\subsection{CRPS Peer Scoring}

The mechanism scores reports with the CRPS, the strictly proper scoring rule for
full predictive distributions \cite{gneiting2007strictly}.
For a predictive distribution with a cumulative distribution function $F$ and an
observation $y$,
\begin{equation}
  \mathrm{CRPS}(F, y) \;=\; \int_{-\infty}^{\infty}\!\big(F(z) - \mathbf{1}\{z \ge y\}\big)^2\, dz .
  \label{eq:crps}
\end{equation}
For a Gaussian forecast $\mathcal{N}(m,\sigma^2)$ with $z = (y-m)/\sigma$ this has
the closed form:
\begin{equation}
  \mathrm{CRPS}\big(\mathcal{N}(m,\sigma^2), y\big)
    = \sigma\Big[\,z\big(2\Phi(z)-1\big) + 2\phi(z) - \tfrac{1}{\sqrt{\pi}}\,\Big],
  \label{eq:crps_gauss}
\end{equation}
where $\phi$ and $\Phi$ are the standard normal density and distribution
functions.
Reporter $i$ is scored against a peer $j \ne i$ by how well it predicts the peer's
observation, in the on-path, flat-prior, diagonal-covariance form used in the baseline
theorem. At the candidate equilibrium the source plays $s=0$, so a reporter whose
belief about $\theta$ is $\mathcal{N}(m_i, v_i)$ predicts $Y_j$ as
$\mathcal{N}(m_i, v_i + \sigma_j^2)$, since $Y_j$ then adds independent noise of the
known variance $\sigma_j^2$ to $\theta$ in \eqref{eq:obs}. The mechanism therefore
inflates the reported variance by that known $\sigma_j^2$ and scores the result
against the peer's reported mean $m_j$:
\begin{equation}
  S_{ij}(R)
    \;=\; \mathrm{CRPS}\big(\mathcal{N}(m_i,\, v_i + \sigma_j^2),\; m_j\big)
    \;=\; \kappa\Big[\,d\big(2\Phi(d)-1\big) + 2\phi(d) - \tfrac{1}{\sqrt{\pi}}\,\Big] ,
  \label{eq:peercrps}
\end{equation}
where the right-hand closed form specialises \eqref{eq:crps_gauss} to the truthful
variance $v_i = \sigma_i^2$, with $\kappa = \sqrt{\sigma_i^2+\sigma_j^2}$ and
$d = (m_i-m_j)/\kappa$.
The payment to reporter $i$ is $\,-\tfrac{1}{N-1}\sum_{j\ne i} S_{ij}(R)$, so a lower
peer CRPS yields a higher payment. The solution concept is the on-path Bayes--Nash
equilibrium made precise in Section \ref{sec:method}.

\subsection{Sensor Diversity}

The source is the third player, in the spirit of source differential peer
prediction \cite{schoenebeck2023two,milionis2025incentive}. After scoring the
reporters, the mechanism aggregates the reports into an SOC estimate that sets the
credits and penalises the source by an amount that depends on the reports. For the
source to face a strict incentive to stay honest, the mechanism must tell
manipulation apart from a genuinely higher SOC value, which is possible only if
distinct pairs $(\theta,s)$ induce distinct report laws.
\begin{definition}[SOC Sensor Diversity]\label{def:diversity}
The ensemble satisfies the \textbf{SOC Sensor Diversity Condition} if the report law
identifies the latent value and the source action jointly, that is,
$L_{\theta,s}=L_{\theta',s'}$ implies $(\theta,s)=(\theta',s')$.
\end{definition}

For the linear Gaussian model of \eqref{eq:obs} this abstract condition reduces
to a transparent algebraic one. Stacking \eqref{eq:truthmean} gives the mean
vector $\mathbb{E}[\boldsymbol{m}] = A\,(\theta, s)^\top$ with design matrix:
\begin{equation}
  A \;=\; [\,\mathbf{1} \;\mid\; \boldsymbol{\alpha}\,] \in \mathbb{R}^{N \times 2},
  \qquad \mathbf{1} = (1,\dots,1)^\top .
  \label{eq:design}
\end{equation}

\begin{lemma}[Algebraic form of diversity]\label{lem:rank}
For the model \eqref{eq:obs}, SOC Sensor Diversity holds if and only if
$\operatorname{rank}(A) = 2$, the sensitivities $\alpha_i$ are not
all equal.
\end{lemma}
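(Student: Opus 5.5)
The plan is to reduce the identification question in Definition \ref{def:diversity} to injectivity of the mean map, and then identify injectivity with full column rank of $A$.

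\textbf{Step 1: the report law is determined by its mean.} Under truthful reporting $R_i=(Y_i,\sigma_i^2)$. The variance coordinates are the public constants $\sigma_i^2$ and carry no information about $(\theta,s)$. The mean coordinates are $\boldsymbol m = A(\theta,s)^\top+\boldsymbol\varepsilon$ with $\boldsymbol\varepsilon\sim\mathcal N(0,\Sigma)$, and $\Sigma$ does not depend on $(\theta,s)$. Hence $L_{\theta,s}$ is the law of $\mathcal N(A(\theta,s)^\top,\Sigma)$ together with a point mass on the variance coordinates. A Gaussian family with fixed covariance is identified by its mean; for instance, the mean is the first moment of the law. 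Therefore $L_{\theta,s}=L_{\theta',s'}$ holds if and only if $A(\theta,s)^\top=A(\theta',s')^\top$. This step does not use diagonality of $\Sigma$, which matches the paper's remark that identification depends only on the mean map.

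\textbf{Step 2: injectivity is equivalent to rank two.} Diversity now says exactly that the linear map $A:\mathbb R^2\to\mathbb R^N$ is injective on the parameter set $\mathbb R\times\mathbb R_{\ge0}$.

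For the direction rank two implies diversity: a full-rank $A$ has trivial kernel, so $A$ is injective on all of $\mathbb R^2$, and in particular on the parameter set.

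For the converse, suppose $\operatorname{rank}(A)<2$. Because $\mathbf 1\neq 0$, this means $\boldsymbol\alpha=c\mathbf 1$ for some $c$, and $c>0$ since every $\alpha_i>0$. The kernel is then spanned by $(c,-1)^\top$. This is the one place where the restriction $s\ge0$ needs care, because the kernel direction must be realised inside the parameter set. Taking $(\theta,s)=(\theta,1)$ and $(\theta',s')=(\theta+c,0)$ gives two distinct admissible pairs with the same mean $\theta+c$ in every coordinate, hence the same law. This counterexample is exactly manipulation being observationally equivalent to a genuine rise in SOC.

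\textbf{Step 3: translate rank into the sensitivities.} The matrix $A=[\mathbf 1\mid\boldsymbol\alpha]$ has rank two if and only if $\boldsymbol\alpha\notin\operatorname{span}(\mathbf 1)$, that is, if and only if the $\alpha_i$ are not all equal. This needs $N\ge2$ so that two distinct entries can exist, which the model assumes.

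I expect no step to be a real obstacle. The only points needing care are justifying that equality of laws reduces to equality of means, including handling the degenerate variance coordinates, and making sure the non-identifying pair in the converse respects $s\ge0$.
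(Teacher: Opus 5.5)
Your proof is correct and follows the same route as the paper: because the variances are fixed common knowledge, $L_{\theta,s}$ depends only on the mean $A(\theta,s)^\top$, so diversity is injectivity of the mean map, which is equivalent to linear independence of $\mathbf{1}$ and $\boldsymbol{\alpha}$. Your explicit pair $(\theta,1)$ and $(\theta+c,0)$ makes precise a point the paper's proof passes over: the kernel direction is realised inside the one-sided parameter set $\mathbb{R}\times\mathbb{R}_{\ge0}$, and this works for any $c$, so you do not need the positivity of the $\alpha_i$.
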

\begin{proof}
The variances are common knowledge and do not depend on $(\theta,s)$, so $L_{\theta,s}$
is determined by the mean vector $A(\theta,s)^\top$. Diversity is therefore injectivity
of $(\theta,s)\mapsto A(\theta,s)^\top$, which holds if and only if the columns
$\mathbf{1}$ and $\boldsymbol{\alpha}$ are linearly independent, that is
$\operatorname{rank}(A) = 2$, equivalently the $\alpha_i$ are not all equal.
\qed
\end{proof}

Diversity therefore states that the manipulation direction $\boldsymbol{\alpha}$ is
transverse to the all-ones direction $\mathbf{1}$, along which a change is
indistinguishable from a change in $\theta$. This geometric reading drives both
directions of the main result, and Proposition \ref{prop:invariance} formalises it as
location invariance. When the $\alpha_i$ are close but not equal, $A$ is
ill-conditioned and the variance of $\hat{s}$ grows towards degeneracy, which a
deployer counters by adding a sensor model with a distinct sensitivity profile. For a
richer feasible-action set the same argument applies to the expanded design
$[\,\mathbf{1}\mid\boldsymbol{\alpha}^{(1)}\mid\cdots\mid\boldsymbol{\alpha}^{(k)}\,]$.

%% file: 4_method.tex
\section{Equilibrium and Source Deterrence}
\label{sec:method}

\subsection{Mechanism Construction}

The mechanism has two stages: the first scores reporters, the second aggregates their
reports and settles the source. The contribution is the incentive characterisation,
not the estimator, which is ordinary \textbf{\underline{W}}eighted \textbf{\underline{L}}east \textbf{\underline{S}}quares (\textbf{WLS}).
CRPS peer scoring elicits truthful reports at the honest-source equilibrium, while the
diversity condition of Section \ref{sec:model} decides whether source manipulation is
identifiable and deterrable by report-only rules. Theorem \ref{thm:main} separates the
two claims.

\noindent \textbf{Reporter stage.}
Each reporter receives the peer payment defined after \eqref{eq:peercrps}. Because
CRPS is strictly proper and the mechanism scores each reporter's predictive for the
peer observation, a reporter minimises its peer CRPS only by reporting that
predictive truthfully (Proposition \ref{prop:sensors}).

\noindent \textbf{Source stage.}
Given the reports, the mechanism solves the WLS problem
induced by \eqref{eq:truthmean} and \eqref{eq:design}.
With weights $W = \operatorname{diag}(1/\sigma_1^2,\dots,1/\sigma_N^2)$ and
report means $\boldsymbol{m} = (m_1,\dots,m_N)^\top$, the estimator is:
\begin{equation}
  \big(\hat{\theta},\, \hat{s}\big)^\top
    \;=\; (A^\top W A)^{-1} A^\top W \boldsymbol{m},
  \label{eq:wls}
\end{equation}
which is defined exactly when $\operatorname{rank}(A) = 2$.
The mechanism issues credits proportional to $\hat\theta$ and charges the source
a penalty $P(\hat{s})$, where $P$ is non-decreasing and non-constant with
$P(z) = 0$ for $z \le 0$. The source action is one-sided, $s \ge 0$, since only
upward inflation raises the credited aggregate.
The source's payoff from action $s$, net of credits and penalty:
\begin{equation}
  U_S(s) \;=\; p\,\mathbb{E}[\hat\theta \mid s] \;-\; \mathbb{E}\big[P(\hat{s}) \mid s\big],
  \label{eq:srcutil}
\end{equation}
where $p > 0$ is the credit price. Crediting on $\hat\theta$ is deliberate. Under
diversity $\hat\theta$ is unbiased (Theorem \ref{thm:main}), so the credit term does
not vary with $s$ and manipulation yields no credit gain. The penalty then makes honest
sourcing strictly best rather than merely offsetting a gain. Section
\ref{sec:security} sizes the stake for the alternative deployment that credits the
more accurate but manipulable precision-weighted mean.
We take manipulation to be costless to the source, the adversarial worst case in which
the mechanism must deter cheating through scoring alone. An intervention that costs
the source energy or labour only makes deterrence easier, so the analysis needs no
estimate of that cost.

We adopt the continuous peer prediction setting \cite{richardson2024peer}, where the
variance inflation in \eqref{eq:peercrps} makes truthful reporting proper without a
separate posterior-consistency hypothesis. At the candidate equilibrium $s=0$, so the
sensitivities enter only through $\alpha_j s$ and diversity plays no role at the
reporter stage. Its role is at the source stage, where it makes $s$ identifiable.

\subsection{Main Theorem}

The necessary direction rests on a single structural property of CRPS.

\begin{lemma}[CRPS translation equivariance]\label{lem:equivariance}
For every $c \in \mathbb{R}$, every distribution $F$, and every $y$,
\begin{equation}
  \mathrm{CRPS}(F^{c}, y + c) \;=\; \mathrm{CRPS}(F, y),
\end{equation}
where $F^{c}(z) = F(z-c)$ is $F$ shifted by $c$.
Consequently, the peer score \eqref{eq:peercrps} depends on the report means only
through the difference $m_i - m_j$, and a common shift $m_k \mapsto m_k + c$
applied to all reporters leaves every peer score unchanged.
\end{lemma}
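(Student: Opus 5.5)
The plan is a change of variables in the defining integral \eqref{eq:crps}, followed by reading the consequence off the closed form \eqref{eq:peercrps}.

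For the first claim, I would write
\[
\mathrm{CRPS}(F^{c},y+c)=\int_{-\infty}^{\infty}\big(F(z-c)-\mathbf{1}\{z\ge y+c\}\big)^2\,dz
\]
and substitute $u=z-c$. This substitution has unit Jacobian and maps $\mathbb{R}$ onto itself. Since $\mathbf{1}\{u+c\ge y+c\}=\mathbf{1}\{u\ge y\}$, the integrand becomes $\big(F(u)-\mathbf{1}\{u\ge y\}\big)^2$. That gives exactly $\mathrm{CRPS}(F,y)$.

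The one point that needs care is finiteness. The integral is of a non-negative function, so the substitution is valid in $[0,\infty]$ by Tonelli, and the identity holds even when both sides are infinite. No moment assumption on $F$ is needed.

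For the consequence, I would apply this to $F=\mathcal{N}(m_i,v_i+\sigma_j^2)$ and $y=m_j$, with $c=-m_j$ say. This gives
\[
S_{ij}(R)=\mathrm{CRPS}\big(\mathcal{N}(m_i-m_j,\,v_i+\sigma_j^2),\,0\big).
\]
The right-hand side depends on the means only through $m_i-m_j$, with $v_i$ and $\sigma_j^2$ unaffected. Equivalently, and as a sanity check, the closed form \eqref{eq:peercrps} involves the means only via $d=(m_i-m_j)/\kappa$, and $\kappa$ does not depend on the means.

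A common shift $m_k\mapsto m_k+c$ for all $k$ leaves every difference $m_i-m_j$ unchanged. The shift also does not touch the reported variances. Hence every $S_{ij}$ is unchanged, and so is every payment $-\tfrac{1}{N-1}\sum_{j\ne i}S_{ij}$.

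There is no real obstacle here. The only subtlety is the measure-theoretic justification for possibly infinite CRPS, which Tonelli handles as above.
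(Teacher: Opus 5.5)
Your proposal is correct and follows the paper's proof: the substitution $u=z-c$ in \eqref{eq:crps} gives the identity, and the consequence comes from the means entering \eqref{eq:peercrps} only through $d=(m_i-m_j)/\kappa$. Your extra derivation with $c=-m_j$ is also valid, avoids the closed form, and covers arbitrary reported $v_i$. One small correction: the substitution for a non-negative, possibly non-integrable integrand is justified by translation invariance of Lebesgue measure, not by Tonelli's theorem, which concerns iterated integrals.
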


\begin{proof}
Substituting $u = z-c$ in \eqref{eq:crps} gives the identity, and \eqref{eq:peercrps}
depends on the means only through $d = (m_i-m_j)/\kappa$.
\qed
\end{proof}

\begin{proposition}[Strict reporter-stage equilibrium]\label{prop:sensors}
Under the standing assumptions, with the source playing $s = 0$ and all peers
reporting truthfully, reporting $R_i = (Y_i, \sigma_i^2)$ is the unique best
response for every reporter. Truthful reporting is therefore a strict Bayes--Nash
equilibrium of the reporter stage.
\end{proposition}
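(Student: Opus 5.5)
The plan is to fix reporter $i$, condition on its private observation $Y_i$, and reduce its expected payment to a sum of expected CRPS values against the peers' truthful reports. With $s=0$ and peers truthful, $m_j=Y_j=\theta+\varepsilon_j$ for every $j\neq i$. In the diffuse-prior limit, reporter $i$'s posterior for $\theta$ given $Y_i$ is $\mathcal{N}(Y_i,\sigma_i^2)$. Under the diagonal covariance the $\varepsilon_j$ are independent of $\varepsilon_i$, so the conditional law of $m_j$ given $Y_i$ is
\begin{equation*}
  G_j \;=\; \mathcal{N}\big(Y_i,\;\sigma_i^2+\sigma_j^2\big).
\end{equation*}
The expected payment from a report $R_i=(m_i,v_i)$ is therefore
\begin{equation*}
  -\tfrac{1}{N-1}\sum_{j\ne i}\mathbb{E}_{y\sim G_j}\big[\mathrm{CRPS}(\mathcal{N}(m_i,v_i+\sigma_j^2),y)\big].
\end{equation*}

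For each $j$ I will invoke strict propriety of CRPS \cite{gneiting2007strictly}: among distributions with finite first moment, $F\mapsto\mathbb{E}_{y\sim G_j}\mathrm{CRPS}(F,y)$ is uniquely minimised at $F=G_j$. The forecast $\mathcal{N}(m_i,v_i+\sigma_j^2)$ equals $G_j$ if and only if $m_i=Y_i$ and $v_i=\sigma_i^2$. The variance inflation by the known $\sigma_j^2$ is an injective reparametrisation, since $v_i\mapsto v_i+\sigma_j^2$ is one-to-one. This is exactly why the mechanism adds $\sigma_j^2$: for every peer $j$ the minimiser is the same report $(Y_i,\sigma_i^2)$.

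Summing over $j\ne i$ then gives a sum of terms, each uniquely minimised at the same point. Hence $(Y_i,\sigma_i^2)$ is the unique minimiser of the expected peer CRPS, i.e.\ the unique best response. Since this holds for every $i$ and every realisation of $Y_i$, truthful reporting is a strict Bayes--Nash equilibrium of the reporter stage.

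The main obstacle is making the conditioning step precise in the flat-prior setting. The diffuse prior is improper, so the predictive $G_j$ must be justified as a limit of proper Gaussian priors with variance $\tau^2\to\infty$. Under a proper prior the predictive has mean $\mu_\tau(Y_i)$ and variance $\sigma_{i,\tau}^2+\sigma_j^2$. Truthfulness then means reporting the posterior $(\mu_\tau(Y_i),\sigma_{i,\tau}^2)$, which is still the unique best response by the same propriety argument and converges to $(Y_i,\sigma_i^2)$. That is the variant Section~\ref{sec:invariance} handles.

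A second point to check is that strict propriety is only over finite-mean distributions, which all Gaussian reports are. Admissible reports therefore need $v_i>0$, or $v_i\ge 0$ with degenerate forecasts allowed. In either case strictness is preserved, because the inflated variance $v_i+\sigma_j^2$ is always positive.
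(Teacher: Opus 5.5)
Your proof is correct and is essentially the paper's argument. You condition on $Y_i$, identify the peer predictive $\mathcal{N}(Y_i,\sigma_i^2+\sigma_j^2)$, apply strict propriety of CRPS term by term, and note that the known inflation $\sigma_j^2$ makes the single report $(Y_i,\sigma_i^2)$ the unique minimiser of every term and hence of the sum.

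The proper-prior limit you add is unnecessary, since the paper takes the flat-prior posterior $\mathcal{N}(Y_i,\sigma_i^2)$ as a standing modelling assumption. If you do go through proper priors, truthful peers report shrunk posterior means, so your best-response claim there holds only if the score targets the de-shrunk $\hat Y_j$ rather than $m_j$.
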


\begin{proof}
Strict propriety \cite{gneiting2007strictly} uniquely minimises each peer term at
reporter $i$'s predictive $\mathcal{N}(Y_i,\sigma_i^2+\sigma_j^2)$, forcing $m_i=Y_i$
and $v_i=\sigma_i^2$. The peer-specific $\sigma_j^2$ cancels, so one report minimises
every term at once.
\qed
\end{proof}

\smallskip\noindent\textbf{Scope of the guarantee.}
The result below is a strict Bayes--Nash statement on the equilibrium path. Three
limits follow.
First, a source deviation is evaluated holding the reporters' equilibrium strategy
fixed, under which the estimator identifies the deviation through $\hat s$. We claim
no sequential or perfect-Bayesian refinement. If reporters placed positive
probability on $s>0$ and re-optimised, their peer-predictive means would generally
differ from the raw $Y_i$ under heterogeneous sensitivities. That could attenuate the
pattern that identifies $s$.
Second, truthful reporting is strict against unilateral deviations, not unique. A
coalition that controls every reporter and adds a common shift $b\mathbf{1}$ to the
reported means raises the credit by $b$ while leaving $\hat s$ and, by
Lemma \ref{lem:equivariance}, every peer score unchanged. That is the collusion form
of the degeneracy the theorem rules out on the source side, and it is why a live
deployment needs sealed reports and Sybil-resistant identity. Every partial coalition
moves its reports away from its peers and pays a strictly positive peer-CRPS penalty.
Third, the impossibility direction restricts the class of settlement rules, as a rule
free to punish high reports as such would deter manipulation without paying for carbon.

\begin{definition}[Admissible settlement rule]\label{def:admissible}
A settlement rule is \emph{report-only} if the credit it pays and the penalty it
charges are functions of the report profile $R$ alone, and \emph{value-monotone} if
its honest-source net payoff $V(\theta)$, evaluated under truthful reporting at true
value $\theta$ with $s=0$, is non-decreasing in $\theta$. A rule with both properties
is \textbf{admissible}.
\end{definition}

\begin{lemma}[Stochastic monotonicity]\label{lem:monotone}
If $\hat s\mid s\sim\mathcal{N}(s,v_s)$, then $\hat s$ is stochastically increasing in
$s$. Hence for any non-decreasing $P$, $\mathbb{E}[P(\hat s)\mid s]$ is non-decreasing
in $s$, and strictly increasing whenever $P$ is non-constant.
\end{lemma}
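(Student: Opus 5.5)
The plan is to write $\hat s$ as a location family in $s$ and then apply a coupling argument. Since $\hat s\mid s\sim\mathcal{N}(s,v_s)$ with $v_s$ independent of $s$, I would represent $\hat s = s + \sqrt{v_s}\,Z$ with a single $Z\sim\mathcal{N}(0,1)$ that does not depend on $s$. This matches the WLS setting of \eqref{eq:wls}: $\hat s$ is linear in $\boldsymbol m$ and the noise covariance is fixed, so only the mean shifts with $s$. For $s<s'$ the coupling gives $\hat s(s') = \hat s(s) + (s'-s)\ge \hat s(s)$ pointwise. Hence $\Pr(\hat s>t\mid s)=1-\Phi((t-s)/\sqrt{v_s})$ is non-decreasing in $s$ for every $t$, which is first-order stochastic dominance.

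For the weak monotonicity of the penalty, I would use the same coupling. Because $P$ is non-decreasing, $P(s'+\sqrt{v_s}Z)\ge P(s+\sqrt{v_s}Z)$ holds pointwise. Taking expectations gives $\mathbb{E}[P(\hat s)\mid s']\ge \mathbb{E}[P(\hat s)\mid s]$. I need the expectations to be well defined. I would assume $P$ is measurable, which is automatic for monotone functions, and has finite Gaussian expectation, for example polynomial growth, as any practical stake does. I would state this integrability as a standing assumption if the paper has not already done so.

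The main step is strictness, and the argument has to use the non-constancy of $P$. Since $P$ is non-decreasing and non-constant, there exist $a<b$ with $P(a)<P(b)$. Fix $s<s'$ and write $\delta=s'-s>0$. The difference of expectations is $\mathbb{E}[P(x+\delta)-P(x)]$ with $x=s+\sqrt{v_s}Z$, and the integrand is nonnegative. It therefore suffices to show that the integrand is positive on a set of positive Gaussian measure.

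To do that, I would cover $[a,b]$ by finitely many steps of length at most $\delta$. The telescoping sum of $P(x+\delta)-P(x)$ over that grid is at least $P(b)-P(a)>0$, so some grid point $x_0$ has $P(x_0+\delta)>P(x_0)$. To extend this from the single point $x_0$ to a set of positive measure, I would choose an intermediate point $c$ with $P(x_0)<P(c)$ or $P(c)<P(x_0+\delta)$. By monotonicity, every $x$ in a small interval just left of $x_0+\delta$, or just right of $x_0$, then satisfies $P(x+\delta)>P(x)$. Handling jumps exactly at endpoints is the fiddly part.

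A cleaner route for the strictness step is the tail formula. Write $\mathbb{E}[P(\hat s)]$ as a Lebesgue–Stieltjes integral $\int \Pr(\hat s>t)\,dP(t)$ plus a constant. The map $s\mapsto\Pr(\hat s>t\mid s)$ is strictly increasing for every $t$, because the Gaussian has full support, and the measure $dP$ is nonzero because $P$ is non-constant. The integral is therefore strictly increasing in $s$. I would present this second argument as the main proof and use the coupling argument only for the weak inequality.
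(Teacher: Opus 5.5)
Your proposal is correct and follows the paper's route. The paper likewise uses the location-shift form to get $\Pr(\hat s>t\mid s)=\Phi\big((s-t)/\sqrt{v_s}\big)$ strictly increasing in $s$ for every $t$, then invokes first-order stochastic dominance, and your Lebesgue--Stieltjes tail formula makes explicit the strictness step the paper only asserts. Your integrability caveat is a genuine unstated requirement of the lemma, since for fast-growing $P$ both expectations can be infinite. Also, the tail formula with a single additive constant needs $P$ bounded below; this holds for the paper's penalty, which has $P(z)=0$ for $z\le 0$, and otherwise you can split the Stieltjes integral at a reference point.
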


\begin{theorem}[Source deterrence if and only if sensor diversity]\label{thm:main}
In the flat-prior Gaussian reporting model, CRPS peer scoring makes truthful
reporting a strict best response at the honest-source equilibrium. Fix truthful
reports by non-colluding reporters. There is then a penalty $P$ for which $s=0$ is a
strict source best response if and only if SOC Sensor Diversity
(Definition \ref{def:diversity}) holds. Necessity holds for every admissible
settlement rule (Definition \ref{def:admissible}) and uses no property of the scoring
rule. Under a proper common Gaussian prior the same conclusion follows after the
de-shrinking transformation of Section \ref{sec:invariance}.
\end{theorem}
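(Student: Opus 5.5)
The proof splits into three parts: the reporter-stage claim, sufficiency, and necessity.

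\textbf{Reporter stage.} The first sentence of the theorem is exactly Proposition \ref{prop:sensors}. Strict propriety of CRPS, applied to each inflated predictive $\mathcal{N}(m_i, v_i+\sigma_j^2)$, forces $(m_i,v_i)=(Y_i,\sigma_i^2)$ simultaneously for all peers $j$. I will simply invoke it.

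\textbf{Sufficiency.} Assume diversity. By Lemma \ref{lem:rank}, $\operatorname{rank}(A)=2$, so $A^\top W A$ is invertible and the WLS estimator \eqref{eq:wls} is well defined. Fix the truthful, non-colluding reports. Then $\boldsymbol m = A(\theta,s)^\top + \boldsymbol\varepsilon$ with $\boldsymbol\varepsilon\sim\mathcal{N}(0,\Sigma)$, and linearity gives
\[
(\hat\theta,\hat s)^\top \sim \mathcal{N}\big((\theta,s)^\top,\,(A^\top W A)^{-1}A^\top W\Sigma W A(A^\top W A)^{-1}\big).
\]
In particular $\mathbb{E}[\hat\theta\mid s]=\theta$ does not depend on $s$, and $\hat s\mid s\sim\mathcal{N}(s,v_s)$ with $v_s>0$. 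Choose any non-decreasing, non-constant $P$ vanishing on $(-\infty,0]$, for example $P(z)=\lambda\max(z,0)$ with $\lambda>0$. The credit term in \eqref{eq:srcutil} is then constant in $s$. By Lemma \ref{lem:monotone}, $\mathbb{E}[P(\hat s)\mid s]$ is strictly increasing, so $U_S(s)<U_S(0)$ for every $s>0$ and $s=0$ is a strict best response.

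\textbf{Necessity.} Assume diversity fails. By Lemma \ref{lem:rank}, $\boldsymbol\alpha=\alpha\mathbf 1$ with $\alpha>0$. Then the truthful report law satisfies $L_{\theta,s}=L_{\theta+\alpha s,0}$, because the means coincide and the variances are fixed and common knowledge. Take any admissible rule from Definition \ref{def:admissible}. Since it is report-only, its credit-minus-penalty is a function $g(R)$, and the source's expected payoff at action $s$ is
\[
\mathbb{E}_{L_{\theta,s}}[g(R)]=\mathbb{E}_{L_{\theta+\alpha s,0}}[g(R)]=V(\theta+\alpha s).
\]
Value-monotonicity gives $V(\theta+\alpha s)\ge V(\theta)$, so every $s>0$ is weakly profitable and $s=0$ cannot be a strict best response, whatever penalty is used. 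This argument never touches the scoring rule, as the theorem claims. Lemma \ref{lem:equivariance} adds that the peer scores are likewise unchanged, so even rules that fold reporter scores into settlement gain nothing. That is a remark, not a step needed for the proof.

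\textbf{Proper-prior case and expected difficulty.} Under a proper common Gaussian prior, the fixed de-shrinking map of Section \ref{sec:invariance} is affine and invertible and does not depend on $(\theta,s)$. It therefore carries the report law back to the flat-prior mean map $\theta\mathbf 1+s\boldsymbol\alpha$, and both directions transfer unchanged. The main obstacle is the necessity step: the identity $L_{\theta,s}=L_{\theta+\alpha s,0}$ must hold as an equality of full joint laws, including the variance reports $v_i$. It must also be checked that truthful reporting at $(\theta,s)$ under the fixed equilibrium strategy is the same strategy used at $(\theta+\alpha s,0)$. Both hold because the strategy $R_i=(Y_i,\sigma_i^2)$ is fixed on path and does not condition on $s$, which is why the scope is restricted to the equilibrium path.
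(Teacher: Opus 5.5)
Your proposal is correct and follows the paper's proof essentially step for step. Sufficiency goes through $\operatorname{rank}(A)=2$, the unbiasedness of the WLS estimator (so the credit term is constant in $s$), and Lemma \ref{lem:monotone} for the penalty; necessity goes through $L_{\theta,s}=L_{\theta+\bar\alpha s,0}$ and value monotonicity. The only difference is ordering. The paper first treats its own rule, noting $V(t)=p\,t-\mathrm{const}$ (which also confirms that rule is admissible and makes manipulation strictly profitable), and then extends to all admissible rules via Proposition \ref{prop:scorefree}; you go straight to that general weak-inequality argument, which is all the theorem needs.
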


\begin{proof}[sketch]
\emph{Sufficiency.} Under diversity $\operatorname{rank}(A)=2$ (Lemma
\ref{lem:rank}), so the WLS estimator \eqref{eq:wls} is unbiased and $\hat s$
identifies any source deviation taken against the fixed equilibrium reporting
strategy. The credit term in \eqref{eq:srcutil} is then constant in $s$, and a
penalty non-decreasing in $\hat s$ makes $s=0$ the strict best response by Lemma
\ref{lem:monotone}. \emph{Necessity.} If the sensitivities are all equal to
$\bar\alpha$, source action $s$ at true value $\theta_0$ and an honest source at the
higher true value $\theta_1=\theta_0+\bar\alpha s$ induce the identical report law,
$L_{\theta_0,s}=L_{\theta_1,0}$. A report-only rule pays the same credit and the same
penalty in both scenarios, and by translation equivariance
(Lemma \ref{lem:equivariance}) every peer score is unchanged as well. Value
monotonicity then gives
$U_S(s)|_{\theta_0}=V(\theta_0+\bar\alpha s)\ge V(\theta_0)=U_S(0)|_{\theta_0}$ for
every $s>0$, so no penalty makes $s=0$ strict. Neither step invokes a property of the
score, so the impossibility covers every scoring rule. The full argument is in
Appendix \ref{app:proofs}.
\qed
\end{proof}

The necessity argument has no finite analogue: the indistinguishability
$L_{\theta_0,s}=L_{\theta_1,0}$ holds in the continuous model directly, so the
characterisation follows by a different route than the discrete enumeration of
\cite{milionis2025incentive} rather than as a corollary of it. This is what strict
propriety does not supply. Propriety fixes what a reporter should say, whereas the
question here is whether the source action is present in the report law at all.

\smallskip\noindent\textbf{Extensions.}
The characterisation generalises to $k$ manipulation modes, where strict truthfulness
needs $\operatorname{rank}(A)=k+1$, and to a known noise covariance $\Sigma$, under
which the diagonal estimator stays exactly unbiased and only
$\operatorname{Var}(\hat s)=c_s^\top\Sigma\,c_s$ changes. A sensitivity drift biases
the credit only through its non-pro\-por\-tional part. Sufficiency uses only strict
propriety and mean elicitation, so the logarithmic and energy scores qualify, while
necessity uses no property of the score at all. The project webpage carries the
details.

\subsection{Invariance and Aggregation}
\label{sec:invariance}

Detecting manipulation is an invariant inference problem, which is why the prior on
$\theta$ plays no role in it. The location group acts on report means by
$g_c\,\boldsymbol m=\boldsymbol m+c\,\mathbf{1}$, and shifting $\theta\mapsto\theta+c$
shifts every truthful mean by $g_c$ through \eqref{eq:truthmean}, so $\theta$ is the
group parameter and the manipulation $s$ is the invariant part.

\begin{proposition}[Prior-free manipulation detection]\label{prop:invariance}
Let $\Pi=\mathbf{I}-\tfrac1N\mathbf{1}\mathbf{1}^\top$ and let $T=\Pi\boldsymbol m$ be
the contrast vector of the report means, the maximal location invariant. Under
truthful reporting $T\sim\mathcal{N}(s\,\Pi\boldsymbol\alpha,\,\Pi\Sigma\Pi)$, a law
free of $\theta$, and the row $c_s^\top$ of \eqref{eq:wls} satisfies
$c_s^\top\mathbf{1}=0$ and $c_s^\top\boldsymbol\alpha=1$, so $\hat s=c_s^\top T$. The
law of $\hat s$, and with it detection and deterrence, is therefore invariant to
$\theta$ and to any prior placed on it. Manipulation is identifiable from $T$ if
and only if $\Pi\boldsymbol\alpha\neq 0$, the condition of Lemma \ref{lem:rank}.
\end{proposition}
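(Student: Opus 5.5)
The proof has four parts: the law of $T$, two algebraic identities for the WLS row $c_s^\top$, the representation $\hat s = c_s^\top T$, and an identifiability criterion.

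\emph{Law of $T$.} Under truthful reporting, \eqref{eq:truthmean} and the Gaussian noise model \eqref{eq:obs} give $\boldsymbol m = \theta\mathbf 1 + s\boldsymbol\alpha + \boldsymbol\varepsilon$ with $\boldsymbol\varepsilon\sim\mathcal N(0,\Sigma)$. Since $\Pi$ is linear and $\Pi\mathbf 1 = 0$,
\[
T = \Pi\boldsymbol m = s\,\Pi\boldsymbol\alpha + \Pi\boldsymbol\varepsilon \sim \mathcal N\big(s\,\Pi\boldsymbol\alpha,\ \Pi\Sigma\Pi\big),
\]
and this law does not involve $\theta$. For maximality of the invariant, note that $T(\boldsymbol m) = T(\boldsymbol m')$ holds exactly when $\boldsymbol m - \boldsymbol m' \in \ker\Pi = \operatorname{span}(\mathbf 1)$, that is, when $\boldsymbol m' = g_c\boldsymbol m$ for some $c$.

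\emph{The identities for $c_s^\top$.} Write $C = (A^\top W A)^{-1}A^\top W$, so that $CA = \mathbf I_2$ under diversity. Reading off the second row of $CA = \mathbf I_2$ column by column gives $c_s^\top\mathbf 1 = 0$ and $c_s^\top\boldsymbol\alpha = 1$.

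\emph{The representation $\hat s = c_s^\top T$.} From $c_s^\top\mathbf 1 = 0$ we get $c_s^\top\Pi = c_s^\top - \tfrac1N (c_s^\top\mathbf 1)\mathbf 1^\top = c_s^\top$. Hence $\hat s = c_s^\top\boldsymbol m = c_s^\top\Pi\boldsymbol m = c_s^\top T$. Consequently
\[
\hat s \sim \mathcal N\big(s\,c_s^\top\Pi\boldsymbol\alpha,\ c_s^\top\Pi\Sigma\Pi c_s\big) = \mathcal N\big(s,\ c_s^\top\Sigma c_s\big),
\]
using $c_s^\top\Pi\boldsymbol\alpha = c_s^\top\boldsymbol\alpha = 1$. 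The law of $\hat s$ is therefore free of $\theta$. This step needs care on one point: $c_s$ is built from the diagonal weights $W$, but the identity $CA = \mathbf I_2$ is purely algebraic, so it continues to hold under a general known covariance $\Sigma$. Only the variance changes, consistent with the Extensions paragraph.

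\emph{Prior invariance.} Any prior on $\theta$ is integrated against a conditional law of $\hat s$ that does not depend on $\theta$, so the marginal law is unchanged. The expected penalty $\mathbb E[P(\hat s)\mid s]$ and the detection probabilities are therefore unaffected. Lemma~\ref{lem:monotone} then applies verbatim.

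\emph{Identifiability from $T$.} This is the step where the argument must be stated carefully. The covariance $\Pi\Sigma\Pi$ is known and free of $s$, so the law of $T$ is determined by its mean $s\,\Pi\boldsymbol\alpha$.
\begin{itemize}
\item If $\Pi\boldsymbol\alpha\neq 0$, the map $s\mapsto s\,\Pi\boldsymbol\alpha$ is injective, so distinct actions give distinct laws of $T$.
\item If $\Pi\boldsymbol\alpha = 0$, the law of $T$ is constant in $s$ and no statistic of $T$ can separate actions.
\end{itemize}
Finally, $\Pi\boldsymbol\alpha = 0$ if and only if $\boldsymbol\alpha \in \operatorname{span}(\mathbf 1)$, that is, all $\alpha_i$ are equal. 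By Lemma~\ref{lem:rank} this is exactly the failure of $\operatorname{rank}(A)=2$.

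One degenerate case must be handled: if $\Pi\Sigma\Pi$ is singular, the Gaussian law of $T$ lives on the subspace $\mathbf 1^\perp$. Equal laws still force equal means, so the injectivity argument above is unaffected.
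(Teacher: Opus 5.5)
Your proof is correct and follows the paper's own argument. It computes the mean of $T$ via $\Pi\mathbf 1=0$, reads $c_s^\top\mathbf 1=0$ and $c_s^\top\boldsymbol\alpha=1$ off $c_s^\top A=(0,1)$, writes $\hat s=c_s^\top\Pi\boldsymbol m=c_s^\top T$, and gets identifiability from the mean $s\,\Pi\boldsymbol\alpha$; your check of maximality and the prior-integration step fill in points the paper only asserts.

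One small correction: $\Pi\Sigma\Pi$ is never invertible, because it annihilates $\mathbf 1$, so your ``degenerate case'' is in fact the general one. Your observation that two Gaussian laws with the same covariance coincide exactly when their means do still covers it without change.
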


Only $\hat\theta$ is equivariant rather than invariant: it tracks $\mathbf{1}$ and
shrinks towards the prior mean under a proper prior. That shrinkage never reaches
$\hat s$, which is what makes deterrence indifferent to the source knowing $\theta$.
Under a proper common prior $\theta\sim\mathcal{N}(\mu_0,\tau^2)$ strict propriety
elicits $m_i=(1-\beta_i)\mu_0+\beta_i Y_i$ with
$\beta_i=\sigma_i^{-2}/(\tau^{-2}+\sigma_i^{-2})$, and de-shrinking to
$\hat Y_i=(m_i-(1-\beta_i)\mu_0)/\beta_i$ before \eqref{eq:wls} restores
$\mathbb{E}[\hat Y_i]=\theta+\alpha_i s$ with variance $\sigma_i^2$. The de-shrink is
a fixed affine map, so identification, the diversity condition, and the necessity
argument hold for every finite $\tau^2$.

The estimator \eqref{eq:wls} reuses the calibrated forecasts that the incentive layer
scores. Its honest-case limit is a known optimum. Dropping the source column from
\eqref{eq:wls} gives the precision-weighted mean with $w_i=1/\sigma_i^2$. That is the
minimum-variance unbiased linear aggregate by the Gauss--Markov theorem, and the
minimax-optimal pooling for the quadratic loss underlying CRPS
\cite{neyman2023proper}. The full estimator does not attain that minimum, since
identifying $s$ needs the second column of $A$. That is the price of robustness, and
Section \ref{sec:experiments} measures it.

\smallskip\noindent\textbf{On-chain settlement.}
The mechanism is a deterministic, closed-form function of the reports, so it suits
direct smart-contract execution. Reporters commit their reports, and the contract
evaluates $(\hat\theta,\hat s)$ and the peer payments, credits $\hat\theta$, and
slashes the source collateral when $\hat s$ exceeds the threshold. The published
$(\sigma_i,\alpha_i)$ also let it reject an ensemble whose precision-weighted
sensitivity variance falls below a floor, which turns diversity into a procurement
requirement. We stop at the rule itself, and claim no contribution on contract
implementation, gas or fixed-point error, report authentication, or the governance
that recalibrates $\alpha$ and $\sigma$.

%% file: 5_experiments.tex
\section{Experiments}
\label{sec:experiments}

\subsection{Experimental Setup}

The reporters are instantiated by eight machine learning models trained to predict
SOC from real visible-to-near-infrared (vis-NIR) reflectance spectra. They stand in
for the calibrated pipelines that reporters would run. The experiments therefore test
the statistical layer, source identifiability and the invariance of the credited value,
and not the strategic behaviour that Theorem \ref{thm:main} governs.
We use the LUCAS subset of the Open Soil Spectral Library, which links each
spectrum to a laboratory organic carbon (OC) measurement \cite{panagos2022european},
and keep the $37{,}280$ mineral soils with OC at most $100$ g/kg, the carbon-credit
deployment domain. This excludes extreme-carbon organic soils, a separate regime with
much larger prediction variance. Each spectrum is resampled to a $10$ nm grid over
$400$--$2450$ nm ($206$ bands). The eight models are a ridge linear baseline, a random
forest, a one-dimensional convolutional network, a spectral vision transformer,
SSL-SoilNet \cite{kakhani2024ssl}, a CNN-LSTM with efficient channel attention
\cite{wang2025soil}, a shifted-window (Swin) transformer \cite{jin2023innovative},
and a stacked autoencoder network \cite{vincent2010stacked}.

Two parameters of the model in Section \ref{sec:model} are measured, not assumed.
The noise level $\sigma_i$ is each model's five-fold cross-validation root mean
squared error.
The sensitivity $\alpha_i$ is each model's SOC-prediction response to a
soil-moisture perturbation of the spectrum, normalised so the largest is one.
Table \ref{tab:benchmark} reports both, with the prediction accuracy ($R^2$). The
measured $\alpha_i$ span $0.004$ to $1.000$, so the SOC Sensor Diversity Condition
holds empirically rather than by assumption.

\begin{table}[t]
\centering
\caption{\textbf{Sensor diversity holds empirically.} SOC benchmark on mineral LUCAS
vis-NIR ($37{,}280$ samples, five-fold cross-validation), with each model's measured
noise $\sigma_i$ and sensitivity $\alpha_i$, which span more than two orders of
magnitude. Best $R^2$ and lowest $\sigma_i$ in \best{bold}, second in \snd{underline}.}
\label{tab:benchmark}
\scriptsize
\setlength{\tabcolsep}{30pt}\renewcommand{\arraystretch}{1.25}
\begin{tabular}{@{}lccc@{}}
\toprule
Model & $R^2$ & $\sigma_i$ (g/kg) & $\alpha_i$ \\
\midrule
CNN                                        & \best{$0.505$} & \best{$13.22$} & $1.000$ \\
Random forest                              & \best{$0.505$} & \snd{$13.23$} & $0.418$ \\
Vision Transformer                         & \snd{$0.491$} & $13.41$ & $0.538$ \\
Swin transformer \cite{jin2023innovative}        & $0.471$ & $13.66$ & $0.705$ \\
Stacked autoencoder \cite{vincent2010stacked}  & $0.443$ & $14.02$ & $0.716$ \\
SSL-SoilNet \cite{kakhani2024ssl}          & $0.396$ & $14.60$ & $0.202$ \\
Ridge (linear)                             & $0.387$ & $14.72$ & $0.665$ \\
CNN-LSTM-ECA \cite{wang2025soil}        & $0.084$ & $17.98$ & $0.004$ \\
\bottomrule
\end{tabular}
\end{table}

The moisture perturbation models wetting as added water absorption, darkening the
spectrum and deepening the O-H bands at $1450$ and $1940$ nm. It is built from those
known absorption bands rather than from measured wet-and-dry pairs of the same soils,
which would be a stronger test and which no open dataset currently supplies. Seven of
the eight models predict lower OC under wetting, so credit inflation comes from
presenting drier soil. The sign does not affect the mechanism, which depends only on
the spread of $\alpha_i$. A source able to pick its intervention is treated in
Section \ref{sec:security}.

The mechanism experiments bootstrap ground truth $\theta$ from the LUCAS 2018 mineral
OC values ($17{,}231$ samples, median $19.9$ g/kg), and source manipulation is a
fraction of $\theta$. We report the systematic bias
$\mathbb{E}[(\hat\theta-\theta)/\theta]$, the over-crediting that manipulation
induces, and the mean absolute relative error
$\mathrm{MARE}=\mathbb{E}[|\hat\theta-\theta|/\theta]$, which also includes random
error. Each point averages $1{,}000$ draws over five seeds, ten for the significance
test.

\smallskip\noindent\textbf{Robustness to the weakest sensor model.}
CNN-LSTM-ECA has near-zero sensitivity and anchors the low end of the spread, so we
check how much rests on it. Dropping it leaves sensitivities spanning $0.202$ to
$1.000$, so the design stays full rank. The manipulation estimate only becomes
noisier, its standard deviation rising from $18.1$ to $22.3$ g/kg and detection power
at full manipulation falling from $29\%$ to $23\%$.

\subsection{Manipulation Resistance}

At the honest-source equilibrium the peer CRPS of each of the eight reporters,
computed analytically from \eqref{eq:peercrps}, is uniquely minimised at zero deviation
from the truthful report, which confirms Proposition \ref{prop:sensors} on the
measured, heterogeneous sensitivities.

\smallskip\noindent\textbf{Systematic bias.}
On the parametric model the WLS estimate holds a bias of $-1.3\%$ across the whole
manipulation range, while the precision-weighted and mean aggregators over-credit by
up to $+57\%$ and output agreement reaches $+31\%$ (Table \ref{tab:baselines}). That is the
quantity that matters for a carbon market, since systematic over-crediting is fraud
whereas random error averages out across parcels. The MARE is flat at $56.8\%$ under
diverse sensitivities but grows from $24.4\%$ to $101.8\%$ under uniform ones, and its
level is the noise floor of vis-NIR SOC prediction rather than a property of the
manipulation.

\smallskip\noindent\textbf{Trained models on perturbed spectra.}
We now drop the parametric $\theta+\alpha_i s$ model and run the eight trained models
directly on real LUCAS spectra perturbed by the wetting operator (train $29{,}824$,
test $7{,}456$). Two facts emerge. First, the models are biased predictors: on clean
spectra they over-predict OC by $7\%$ to $27\%$ (one under-predicts by $5\%$), so any
aggregate inherits a calibration bias. Second, the manipulation invariance survives on
these real nonlinear responses (Figure \ref{fig:realmanip}(a)): the precision-weighted,
mean, and median aggregators collapse from about $+14\%$ to between $-47\%$ and
$-55\%$, while the WLS estimate stays within a $9$-point band ($+17\%$ to $+26\%$),
about seven times less sensitive. The residual $\approx 20\%$ level is the models' own
calibration bias, not the manipulation. The mechanism therefore removes the
manipulation bias but inherits the calibration bias. The guarantee is relative
(manipulation invariance) rather than absolute (accuracy).
\begin{figure}[t]
  \centering
  \begin{minipage}[b]{0.47\linewidth}
    \centering
    \includegraphics[width=\linewidth]{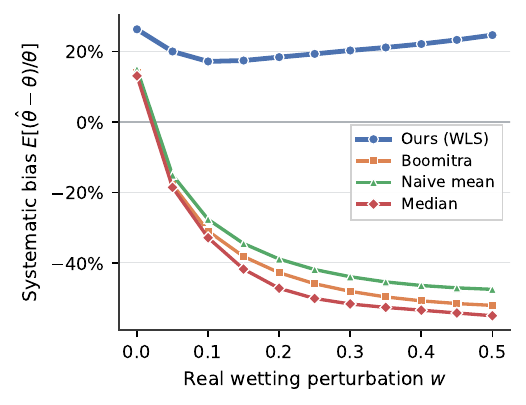}\\[2.5pt]
    {\footnotesize (a)}
  \end{minipage}\hfill
  \begin{minipage}[b]{0.47\linewidth}
    \centering
    \includegraphics[width=\linewidth]{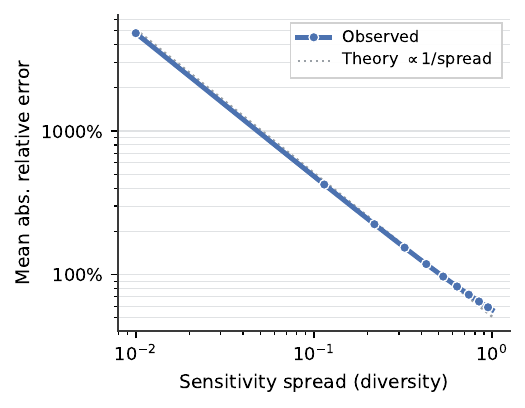}\\[2.5pt]
    {\footnotesize (b)}
  \end{minipage}
  \caption{\textbf{Invariance holds on real spectra; error diverges as diversity
  vanishes.} \textbf{(a)} Systematic bias of the eight trained models on real perturbed
  spectra; the aggregation rules collapse as wetting grows while WLS stays in a narrow
  band whose nonzero level is the models' own calibration bias. \textbf{(b)} Estimation
  error versus sensitivity spread at twenty percent manipulation, the empirical form of
  $v_s\to\infty$.}
  \label{fig:realmanip}\label{fig:diversity}
\end{figure}

\smallskip\noindent\textbf{Sensor diversity.}
Figure \ref{fig:diversity}(b) interpolates the sensitivities from uniform to the
measured diverse set under twenty percent manipulation. As the spread approaches
zero the error diverges, from $56.8\%$ at the diverse setting to over $4800\%$
near the uniform limit, the empirical boundary of Theorem \ref{thm:main}.

\subsection{Aggregation-Rule Comparison}

Table \ref{tab:baselines} compares the WLS mechanism against seven aggregation rules.
Each is the core aggregation rule of a deployed system (precision-weighted mean for
Boomitra, outlier removal and peer comparison for WeatherXM, output agreement for
OFP), abstracted from its full pipeline. The rows are not evaluations of those
pipelines. The comparison isolates the aggregation rule because that is what governs
manipulation resistance. Spectral moisture compensation and direct moisture sensing
sit upstream of the oracle and are complementary rather than competing
(Section \ref{sec:related}).
At full manipulation six of the seven rules over-credit by more than half and the
seventh by $31\%$, while WLS does not, and the absolute errors converge near the noise
floor. A paired one-sided $t$-test and Wilcoxon signed-rank test over ten seeds find
WLS significantly more accurate than six of the seven, with $t$-test $p$ from
$3{\times}10^{-6}$ to $5{\times}10^{-3}$, the Wilcoxon $p$ at its ten-seed floor
$9.8{\times}10^{-4}$ for all but two rows, and every mean absolute relative error
inside a $\pm2$-point $95\%$ confidence interval.
At zero manipulation the simple aggregators are more accurate, since separating the
source costs efficiency when there is nothing to separate. We report that trade-off
rather than hide it. Table \ref{tab:baselines} traces the frontier. MARE$_0$ is the
honest-source error, the variance cost, and bias$_1$ is the manipulation bias. WLS
sits at high MARE$_0$ and near-zero bias. The aggregators sit at low MARE$_0$ and large
bias.

\smallskip\noindent\textbf{Estimator versus incentives.}
The bias gap in Table \ref{tab:baselines} is a property of the estimator, not of the
payments. Any estimator that fits $(\theta,s)$ jointly inherits the same unbiasedness
once diversity holds: Huber and $\ell_1$ regressions on the same design hold a credit
bias of $-1.5\%$ against the $-1.3\%$ of WLS, so the invariance follows from modelling
$s$ and not from the loss or the payments. The peer scoring and the penalty do a
separate job. They make truthful reporting and $s=0$ a strict equilibrium
(Theorem \ref{thm:main}), so a rational source gains nothing from manipulating at all,
and the detected $\hat s$ supports the slashing analysed below.

\begin{table}[t]
\centering
\caption{\textbf{Shift-naive aggregation over-credits; the joint estimate does not.}
Mean absolute relative error at zero (MARE$_0$) and full (MARE$_1$) manipulation,
systematic bias at full manipulation (bias$_1$), and the paired one-sided $t$-test $p$
against our method over ten seeds. Best in \best{bold}, second in \snd{underline}.}
\label{tab:baselines}
\setlength{\tabcolsep}{9pt}\renewcommand{\arraystretch}{1.2}
\begin{tabular}{@{}lcccc@{}}
\toprule
Method & MARE$_{0}$ & MARE$_{1}$ & bias$_{1}$ & $t$-test $p$ \\
\midrule
Output agreement (OFP)         & $49\%$ & $60\%$ & \snd{$+31\%$} & $4.4{\times}10^{-5}$ \\
Precision-weighted (Boomitra)  & \best{$24\%$} & $61\%$ & $+57\%$ & $9.3{\times}10^{-5}$ \\
Outlier removal (WeatherXM)    & $28\%$ & $61\%$ & $+56\%$ & $1.1{\times}10^{-5}$ \\
Peer comparison (WeatherXM)    & $31\%$ & $62\%$ & $+56\%$ & $2.8{\times}10^{-6}$ \\
Sample median                  & $28\%$ & $61\%$ & $+55\%$ & $3.2{\times}10^{-5}$ \\
CRPS-reweighted mean           & \snd{$25\%$} & \snd{$58\%$} & $+54\%$ & $4.5{\times}10^{-3}$ \\
Unweighted mean                & \snd{$25\%$} & \snd{$58\%$} & $+53\%$ & $0.04$ \\
\midrule
\rowcolor{ourstint}
WLS \ours    & $57\%$ & \best{$57\%$} & \best{$-1.3\%$} & n/a \\
\bottomrule
\end{tabular}
\end{table}

\smallskip\noindent\textbf{Robustness.}
Several departures from the idealised model leave the invariance intact, with the full
sweeps on the project webpage. Scaling the ensemble from three to twenty reporters
lowers the estimation error and raises detection power at fifty percent manipulation
from $10\%$ to $17\%$. Student-$t$, lognormal, and uniform noise of the same scale
leave the MARE flat between $50\%$ and $57\%$, as does a fifty percent error in the
assumed $\sigma_i$, since the theory needs only mean-zero noise. Correlated residuals,
which the eight models incur by reading a common spectrum, sharpen detection rather
than weakening it, since $\hat s$ is a contrast that cancels the common component:
detection power at fifty percent manipulation rises from $14\%$ at $\rho=0$ to $47\%$
at $\rho=0.9$. The independent-noise figures are therefore the conservative case.

\subsection{Security Analysis}\label{sec:security}

The credit rule is a design choice, and the stake exists only for one side of it.
Crediting on the debiased estimate $\hat\theta$ removes the manipulation gain
entirely (Theorem \ref{thm:main}), so honest sourcing is strict with no stake, at
the cost of accuracy: $\hat\theta$ inherits the $57\%$ error floor of
Table \ref{tab:baselines} against $24\%$ for the precision-weighted mean. A deployer
who values honest-case accuracy may instead credit on the precision-weighted mean of
Section \ref{sec:invariance} and restore resistance with a stake $M$ slashed when
$\hat s>\delta$.

\smallskip\noindent\textbf{Detector variance and stake.}
The manipulation estimate $\hat s=c_s^\top\boldsymbol m$, with $c_s$ the second row of
$(A^\top W A)^{-1}A^\top W$, satisfies $\hat s\mid s\sim\mathcal N(s,v_s)$ with
$v_s=(\sum_i w_i(\alpha_i-\bar\alpha_w)^2)^{-1}$, $w_i=\sigma_i^{-2}$ and
$\bar\alpha_w=\sum_i w_i\alpha_i/\sum_i w_i$, so detection sharpens with the spread of
the $\alpha_i$ and $v_s\to\infty$ as that spread vanishes. A one-sided test at
false-alarm rate $a$ rejects honest sourcing when $\hat s>z_{1-a}\sqrt{v_s}$, giving
power $P_{\text{detect}}(s)=\Phi(s/\sqrt{v_s}-z_{1-a})$. Manipulation is then deterred
when $(P_{\text{detect}}(s)-a)M\ge G(s)$ for the precision-weighted credit gain
$G(s)=p\,\bar\alpha_w s$, hence a minimum stake
$M^\star(s)=G(s)/(P_{\text{detect}}(s)-a)$. The deterrent is the detector's power above
its size, not its raw power. As the spread vanishes $P_{\text{detect}}\to a$ and
$M^\star\to\infty$: no finite stake deters without taxing honest sources equally. That
is the economic form of the necessity direction.
With $\sqrt{v_s}=18.1$ g/kg at the median value $\theta=19.9$ g/kg and $a=5\%$,
detection power is $7.7\%$, $13.7\%$, and $29.4\%$ at manipulations of $20\%$, $50\%$,
and $100\%$ of $\theta$, and $M^\star/G$ falls from $36.8\times$ through $11.5\times$
to $4.1\times$. At \$15 per tonne of carbon dioxide on one hectare the median parcel
earns an honest source \$4{,}269 per round, a twenty percent manipulation gains
$G=\$483$, and the minimum deterring stake is \$17{,}785. The stake is an economic
attack-cost condition and a statistical rather than a cryptographic one: it raises the
cost of manipulation without making it infeasible.

\smallskip\noindent\textbf{Scope and limits.}
Four limits bound the guarantee, and none is repaired by a larger stake. The detector
is statistically thin. The standard deviation of $\hat s$ is about $18$ g/kg against a
median value of $20$, so detection power is modest and the stake layer needs the
worst-case gain $G$ estimated accurately. The no-bias property holds along
$\boldsymbol\alpha$, calibrated from a synthetic wetting operator on a single survey.
A source free to choose its intervention could pick one whose response is near-uniform,
close to $\mathrm{span}(\mathbf{1})$. That drives the ensemble towards the
rank-deficiency boundary that only diversity against every feasible direction rules
out. Whether the
measured sensitivities are stable across seasons and soil textures needs longitudinal
data and paired wet-and-dry field measurements that we do not have. Finally, a
coalition of all eight reporters adding a common shift pays no peer penalty and drives
$\hat s$ to zero, whereas every partial coalition pays a strictly positive peer-CRPS
penalty, so a live oracle needs a settlement layer that seals reports and preserves a
diversity floor under quorum selection.

%% file: 6_conclusion.tex
\section{Conclusion}
\label{sec:conclusion}

We extended source-manipulation-resistant peer prediction to continuous Gaussian
reports. Strict propriety of CRPS settles only what a reporter should say. The result
here is about the source. By translation equivariance, an admissible report-only
settlement rule identifies and deters source manipulation if and only if the sensor
sensitivities differ, a condition that holds on eight real SOC models. The source-stage
estimator is ordinary weighted least squares, and the characterisation around it is
what the paper contributes. The guarantee is limited. It holds on the
equilibrium path, assumes non-colluding reporters, and is calibrated along one
manipulation direction measured from a synthetic wetting operator. It delivers
manipulation invariance rather than absolute accuracy. Sensor diversity is therefore a
structural requirement for honest sourcing. Paired wet-and-dry validation, diversity
against an adversary that chooses its intervention, and a settlement layer that seals
reports are left to future work.

%% file: 7_appendix.tex
\newpage
\section{Deferred Proofs}
\label{app:proofs}

This appendix gives the full proofs of the results stated in the main text:
Theorem \ref{thm:main}, Lemma \ref{lem:equivariance}, Proposition
\ref{prop:sensors}, Lemma \ref{lem:monotone}, the honest-case aggregation
statement of Section \ref{sec:invariance}, the score-free impossibility
(Proposition \ref{prop:scorefree}), and the prior-free detection property
(Proposition \ref{prop:invariance}).

\begin{proof}[of Lemma \ref{lem:equivariance}]
Substituting $u = z - c$ in \eqref{eq:crps},
\begin{equation*}
\begin{aligned}
  \mathrm{CRPS}(F^{c}, y+c)
    &= \int \big(F(z-c) - \mathbf{1}\{z \ge y+c\}\big)^2 dz \\
    &= \int \big(F(u) - \mathbf{1}\{u \ge y\}\big)^2 du \\
    &= \mathrm{CRPS}(F, y).
\end{aligned}
\end{equation*}
In \eqref{eq:peercrps} the score is a function of $d = (m_i - m_j)/\kappa$ and of
$\sigma_i,\sigma_j$ alone, so it is invariant under $m_i \mapsto m_i + c$,
$m_j \mapsto m_j + c$.
\qed
\end{proof}

\begin{proof}[of Proposition \ref{prop:sensors}]
Fix truthful reporting by all peers $j \ne i$ and the equilibrium belief $s = 0$.
Sensor $i$ chooses $(m_i, v_i)$ to minimise
$\sum_{j \ne i} \mathbb{E}\big[\mathrm{CRPS}\big(\mathcal{N}(m_i,\, v_i + \sigma_j^2),\, Y_j\big) \mid Y_i\big]$.
Because CRPS is strictly proper \cite{gneiting2007strictly}, each term is uniquely
minimised when the scored distribution equals sensor $i$'s posterior predictive for
$Y_j$, which is $\mathcal{N}(Y_i,\, \sigma_i^2 + \sigma_j^2)$. This forces
$m_i = Y_i$ and $v_i + \sigma_j^2 = \sigma_i^2 + \sigma_j^2$, so $v_i = \sigma_i^2$.
The peer-specific $\sigma_j^2$ cancels, so the single report $(Y_i, \sigma_i^2)$
minimises every term at once and is the unique minimiser of the sum. Uniqueness
makes the equilibrium strict.
\qed
\end{proof}

\begin{proof}[of Lemma \ref{lem:monotone}]
For $\hat s\mid s\sim\mathcal N(s,v_s)$ the law is a location shift in $s$, so
$\Pr(\hat s>x\mid s)=\Phi\big((s-x)/\sqrt{v_s}\big)$ is strictly increasing in $s$ for
every $x$. Hence $\hat s$ is strictly stochastically increasing in $s$, and by
first-order stochastic dominance $\mathbb{E}[P(\hat s)\mid s]$ is non-decreasing in $s$
for every non-decreasing $P$, strictly so when $P$ is non-constant. The argument uses
no smoothness of $P$, so it covers the threshold slashing rule.
\qed
\end{proof}

\begin{proof}[of the honest-case aggregation statement, Section \ref{sec:invariance}]
With known $\sigma_i$ and unbiased Gaussian means, inverse-variance weighting is the
minimum-variance unbiased linear combination by the Gauss--Markov theorem, and the
minimax-optimal aggregate of \cite{neyman2023proper}. Setting the design matrix to
the single column $\mathbf{1}$ in \eqref{eq:wls} yields this weighting.
\qed
\end{proof}

\begin{proof}[of Theorem \ref{thm:main}]
\emph{Sufficiency} ($\Leftarrow$).
Assume diversity.
By Lemma \ref{lem:rank}, $\operatorname{rank}(A) = 2$, so the WLS estimator
\eqref{eq:wls} is well defined.
By Proposition \ref{prop:sensors} truthful reporting is a strict best response at
the equilibrium action $s = 0$, and a sensor's report equals its observation for
any realised $s$. Under the equilibrium reporting strategy,
source deviations change realised observations to $Y_i=\theta+\alpha_i s+\varepsilon_i$,
and sensors continue to report those realised observations.
Under truthful reporting $\mathbb{E}[\boldsymbol m] = A(\theta,s)^\top$, so the WLS
estimator is unbiased, $\mathbb{E}[\hat\theta \mid s] = \theta$ and
$\mathbb{E}[\hat s \mid s] = s$, and $\hat s$ identifies any source deviation under that fixed equilibrium reporting strategy.
It remains to give the source a strict incentive to play $s = 0$.
The credit term in \eqref{eq:srcutil} is constant in $s$ because
$\mathbb{E}[\hat\theta\mid s]=\theta$.
By Lemma \ref{lem:monotone}, $\mathbb{E}[P(\hat s)\mid s]$ is strictly increasing in
$s$, so on the one-sided action set $s\ge0$ it is minimised at $s=0$.
Hence $U_S(0) > U_S(s)$ for all $s > 0$, and $s=0$ is the strict best response.
Truthful reporting together with $s=0$ is a strict Bayes--Nash equilibrium.

\emph{Necessity} ($\Rightarrow$).
We argue the contrapositive.
Suppose diversity fails.
By Lemma \ref{lem:rank}, the $\alpha_i$ are all equal to a common value
$\bar\alpha$.
Thus $\boldsymbol\alpha = \bar\alpha\,\mathbf{1}$, and source action $s$ shifts
every report mean by the same amount $\bar\alpha s$.
Fix any true value $\theta_0$.
Under truthful reporting and source action $s$, the report means have distribution
$m_i \sim \mathcal{N}(\theta_0 + \bar\alpha s,\,\sigma_i^2)$.
This is exactly the profile produced by an honest source ($s'=0$) at the higher
true value $\theta_1 = \theta_0 + \bar\alpha s$.
The two scenarios induce the identical report law, $L_{\theta_0,s} = L_{\theta_1,0}$.
Two consequences follow.
First, by Lemma \ref{lem:equivariance} the shift is common to all sensors, so
every peer score is unchanged. The sensor payments carry no information about $s$.
Second, any mechanism is a function of the report profile alone.
Its credit and penalty terms therefore take the same values in the two scenarios.
Write $V(t) := U_S(0)\big|_{\theta=t}$ for the honest-source payoff at true value
$t$. For the payment rules considered here, $V$ is strictly increasing without
further assumptions: the credit pays $p$ times a location-equivariant aggregate of
the report means (the estimator \eqref{eq:wls} where defined, with the
precision-weighted mean as its honest-case limit), and the penalty reads the
reports only through a contrast, whose law is free of $\theta$
(Proposition \ref{prop:invariance}), so $V(t) = p\,t - \mathrm{const}$.
The source payoff satisfies
$U_S(s)\big|_{\theta_0} = V(\theta_0 + \bar\alpha s) > V(\theta_0) = U_S(0)\big|_{\theta_0}$
for all $s > 0$.
Manipulation is therefore strictly profitable, so $s = 0$ cannot be a strict best
response, and no penalty $P$ yields a strict equilibrium. Proposition
\ref{prop:scorefree} extends the impossibility to every report-based mechanism
whose honest-source net payoff is non-decreasing in the true value.
\qed
\end{proof}

\begin{proposition}[Score-free impossibility for admissible rules]\label{prop:scorefree}
Suppose $\boldsymbol\alpha\parallel\mathbf{1}$. For every scoring rule and every
admissible settlement rule in the sense of Definition \ref{def:admissible}, that is
every report-only rule whose honest-source net payoff $V$ is non-decreasing in the
true value, manipulation is weakly profitable, so $s=0$ is not a strict best
response. Without value monotonicity the statement fails: a rule that penalises high
reported values as such deters manipulation, but it no longer pays for carbon.
\end{proposition}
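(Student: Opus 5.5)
The plan is to reduce everything to the observational equivalence $L_{\theta_0,s}=L_{\theta_0+\bar\alpha s,0}$ and then read the source's payoff off the honest-source value function $V$. First, since $\boldsymbol\alpha\parallel\mathbf{1}$ and every $\alpha_i>0$, write $\boldsymbol\alpha=\bar\alpha\mathbf{1}$ with $\bar\alpha>0$. Fix any true value $\theta_0$ and any $s>0$. Under truthful reporting, the mean vector is $A(\theta_0,s)^\top=(\theta_0+\bar\alpha s)\mathbf{1}$. The covariance $\Sigma$ and the reported variances $v_i=\sigma_i^2$ do not depend on $(\theta,s)$. Hence the full report profile $R$ has the same law as under an honest source at $\theta_1=\theta_0+\bar\alpha s$; this is the failure of injectivity in Lemma \ref{lem:rank}. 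Under a proper common prior, I would first apply the fixed affine de-shrinking map of Section \ref{sec:invariance}. Equality of laws is preserved, because the same map is applied in both scenarios.

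Second, I would pass from equality of laws to equality of payoffs without using any property of the scoring rule. An admissible rule is report-only, so the source's net payoff is some measurable functional $\Psi(R)$: credit minus penalty. Any reporter payments the source bears, or any score-based terms folded into settlement, are also functions of $R$. The expected payoff $\mathbb{E}[\Psi(R)]$ therefore depends on $(\theta,s)$ only through $L_{\theta,s}$. This gives $U_S(s)|_{\theta_0}=\mathbb{E}_{L_{\theta_0,s}}[\Psi]=\mathbb{E}_{L_{\theta_1,0}}[\Psi]=V(\theta_1)$. This is exactly why the statement covers every scoring rule: the score enters only as part of $\Psi$. Lemma \ref{lem:equivariance} is not needed for the argument. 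It only explains why, for CRPS specifically, the reporters' incentives are also untouched, so that truthful reporting remains the fixed strategy in both scenarios. For a general score, I would simply hold the reporters' strategy fixed at the truthful one, as in the equilibrium-path convention stated before Definition \ref{def:admissible}.

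Third, value monotonicity gives $V(\theta_0+\bar\alpha s)\ge V(\theta_0)=U_S(0)|_{\theta_0}$ for every $s>0$. Thus every manipulation is weakly profitable and $s=0$ is not a strict best response, whatever penalty is chosen. For the final sentence of the statement I would exhibit a counterexample. Take a rule whose penalty is $K\cdot\mathbf{1}\{\bar m>\theta^\ast\}$, where $\bar m$ is the mean report and the threshold is chosen so that honest sources with $\theta_0<\theta^\ast$ are rarely penalised. For a suitable prior or a fixed $\theta_0$, and with $K$ large, this makes $s=0$ strictly optimal. However, $V$ is then decreasing around $\theta^\ast$, so the rule violates value monotonicity and does not pay for carbon.

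The main obstacle is the second step: making precise which objects count as functions of $R$. In particular, one must ensure that no part of the settlement depends on $\theta$ or $s$ directly, for example through ex post audits. One must also ensure that reporters do not change strategy across the two scenarios. I would handle this by stating explicitly that the comparison is on the equilibrium path with reporting strategies fixed, which is the same scope restriction already declared for Theorem \ref{thm:main}. With that in place the argument is a one-line pushforward identity. The counterexample also needs care to be strict, and a Gaussian tail computation there suffices.
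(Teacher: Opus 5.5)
Your impossibility argument coincides with the paper's. It runs through the observational equivalence $L_{\theta_0,s}=L_{\theta_0+\bar\alpha s,0}$, then uses report-only settlement to get $U_S(s)|_{\theta_0}=V(\theta_0+\bar\alpha s)$, then applies value monotonicity. Reporting strategies are held fixed on path and no property of the score is used. That part is fine.

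The gap is in your witness for the final sentence of the statement, which the paper asserts without proof. Your remark that $V$ decreases only \emph{around} $\theta^\ast$ presupposes that the rule also pays a credit increasing in $\bar m$, say $p\bar m$. Such a rule deters at no $\theta_0$ for any $K$. The penalty is bounded by $K$ while the action set $s\ge 0$ is unbounded. So for every $s>K/(p\bar\alpha)$ one has $V(\theta_0+\bar\alpha s)\ge p(\theta_0+\bar\alpha s)-K>p\theta_0\ge V(\theta_0)$. A Gaussian tail computation cannot repair this, because the failure is at large $s$, not near the threshold.

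The identity $U_S(s)|_{\theta_0}=V(\theta_0+\bar\alpha s)$ that drives your impossibility already tells you what a witness must look like. At a given $\theta_0$, $s=0$ is a strict best response if and only if $V(\theta_0)>V(t)$ for every $t>\theta_0$. This fails whenever $V(t)\to\infty$ as $t\to\infty$, and it holds at every $\theta_0$ if and only if $V$ is strictly decreasing. So take a rule whose net payoff is strictly decreasing in the reported level, for instance net payoff $-\bar m$, or your indicator penalty with no credit at all. Then $U_S(s)|_{\theta_0}$ is strictly decreasing in $s$ for every $K>0$, and neither the threshold tuning nor a large $K$ is needed. That is precisely a rule that ``no longer pays for carbon''.
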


\begin{proof}
When $\boldsymbol\alpha\parallel\mathbf{1}$ the report laws satisfy
$L_{\theta_0,s}=L_{\theta_0+\bar\alpha s,0}$, shown in the proof of
Theorem \ref{thm:main}. A report-only mechanism assigns the two scenarios
identical credits and penalties, so the source payoff is
$V(\theta_0+\bar\alpha s)\ge V(\theta_0)$ by monotonicity. Sufficiency, by
contrast, needs only strict propriety and mean elicitation, so the CRPS, the
logarithmic score, and the multivariate energy score all qualify, the last
extending the characterisation to vector-valued reports.
\qed
\end{proof}

\begin{proof}[of Proposition \ref{prop:invariance}]
$T$ is the maximal invariant of the location group. Its mean is
$\Pi\,\mathbb{E}[\boldsymbol m]=\Pi(\theta\mathbf{1}+s\boldsymbol\alpha)
=s\,\Pi\boldsymbol\alpha$ since $\Pi\mathbf{1}=0$, and its covariance does not
involve $(\theta,s)$. The normal equations of \eqref{eq:wls} give
$c_s^\top A=(0,1)$, that is $c_s^\top\mathbf{1}=0$ and
$c_s^\top\boldsymbol\alpha=1$. From $c_s\perp\mathbf{1}$ we get $\Pi c_s=c_s$ and
$\hat s=c_s^\top\boldsymbol m=(\Pi c_s)^\top\boldsymbol m=c_s^\top T$. The mean
$s\,\Pi\boldsymbol\alpha$ separates the values of $s$ exactly when
$\boldsymbol\alpha\notin\operatorname{span}(\mathbf{1})$, which is
Lemma \ref{lem:rank}.
\qed
\end{proof}

\section{Stake Sizing}
\label{app:stake}

This appendix derives the collateral required for the high-efficiency
precision-weighted-credit deployment discussed in Section \ref{sec:security}, and makes explicit its
dependence on the sensitivity spread, the sensor noise, and the detector error
rates, the connection summarised in Section \ref{sec:experiments}.

\smallskip\noindent\textbf{Detector variance.}
The manipulation estimate is $\hat s = c_s^\top\boldsymbol m$, where $c_s$ is the
second row of $(A^\top W A)^{-1}A^\top W$. Under truthful reporting
$\hat s \mid s \sim \mathcal N(s, v_s)$ with
\begin{equation}
\begin{gathered}
  v_s \;=\; \big[(A^\top W A)^{-1}\big]_{22}
       \;=\; \frac{1}{\sum_i w_i\,(\alpha_i-\bar\alpha_w)^2},\\
  w_i=\tfrac{1}{\sigma_i^2},\qquad
  \bar\alpha_w=\frac{\sum_i w_i\alpha_i}{\sum_i w_i}.
\end{gathered}
\label{eq:vs}
\end{equation}
The denominator is the precision-weighted variance of the sensitivities, so
detection sharpens with the spread of the $\alpha_i$ and with lower noise, and
$v_s\to\infty$ as the spread vanishes (the uniform-sensitivity limit).

\smallskip\noindent\textbf{Detection power.}
A one-sided test rejects honest sourcing when $\hat s>\delta$. Fixing the
false-alarm (Type I) rate at $a$ sets $\delta=z_{1-a}\sqrt{v_s}$, and the detection
power (one minus the Type II rate) at manipulation $s$ is
\begin{equation}
  P_{\text{detect}}(s) \;=\; \Pr(\hat s>\delta\mid s)
    \;=\; \Phi\!\Big(\frac{s}{\sqrt{v_s}}-z_{1-a}\Big).
  \label{eq:pdetect}
\end{equation}

\smallskip\noindent\textbf{Deterring stakes.}
Section \ref{sec:security} gives the deterrence condition $(P_{\text{detect}}(s)-a)\,M\ge G(s)$ for the
precision-weighted credit gain $G(s)=p\bar\alpha_w s$. Substituting \eqref{eq:pdetect} gives the minimum
stake in closed form,
\begin{equation}
M^\star(s)=\frac{G(s)}{P_{\text{detect}}(s)-a}
        =\frac{p\bar\alpha_w s}{\Phi\!\big(s/\sqrt{v_s}-z_{1-a}\big)-a}.
\label{eq:stakeapp}
\end{equation}
The deterrent is the detector's power above its size, $P_{\text{detect}}(s)-a$, not its raw power. As the
sensitivity spread vanishes, $v_s\to\infty$ and $P_{\text{detect}}(s)\to a$, so $M^\star\to\infty$: no
finite stake deters without taxing honest sources equally, the economic form of the necessity direction.

The detection power and the minimum stake at the median value $\theta=19.9$ g/kg,
with $\sqrt{v_s}=18.1$ g/kg and a $5\%$ false-alarm rate, follow from
\eqref{eq:pdetect}--\eqref{eq:stakeapp} and are reported in Table \ref{tab:stakeapp}.

\begin{table}[ht]
\centering
\caption{\textbf{The minimum deterring stake falls from $36.8\times$ to $4.1\times$ the
gain as manipulation grows.} Detection power and minimum stake at the median SOC value
($\theta=19.9$ g/kg, $5\%$ false-alarm rate), with $\sqrt{v_s}=18.1$ g/kg.}
\label{tab:stakeapp}
\setlength{\tabcolsep}{28pt}\renewcommand{\arraystretch}{1.25}
\begin{tabular}{@{}lccc@{}}
\toprule
Manipulation $s/\theta$ & $20\%$ & $50\%$ & $100\%$ \\
\midrule
Detection power $P_{\text{detect}}$ & $7.7\%$ & $13.7\%$ & $29.4\%$ \\
Minimum stake $M^\star/G$ & $36.8\times$ & $11.5\times$ & $4.1\times$ \\
\bottomrule
\end{tabular}
\end{table}

\begin{figure}[ht]
  \centering
  \begin{minipage}[b]{0.49\linewidth}
    \centering
    \includegraphics[width=\linewidth]{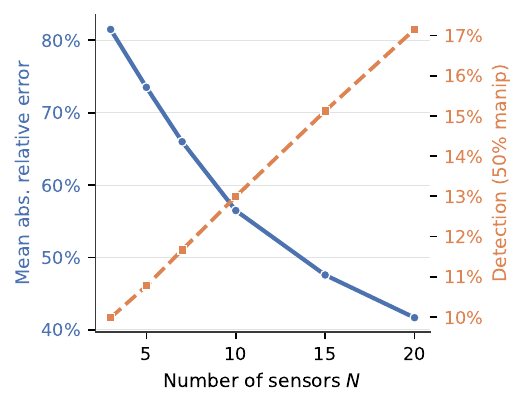}\\[2pt]
    {\footnotesize (a)}
  \end{minipage}\hfill
  \begin{minipage}[b]{0.49\linewidth}
    \centering
    \includegraphics[width=\linewidth]{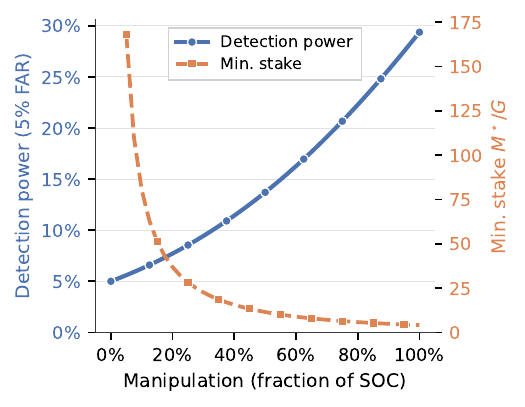}\\[2pt]
    {\footnotesize (b)}
  \end{minipage}
  \caption{\textbf{Scaling and stake.} \textbf{(a)} Estimation error (left axis) and
  detection power at fifty percent manipulation (right axis) versus the number of
  sensors; the honest and manipulated errors coincide. \textbf{(b)} Detection power
  (left axis) and minimum deterring stake $M^\star/G$ (right axis) versus
  manipulation for diverse sensors; uniform sensors give no power above the
  false-alarm rate, so no finite stake selectively deters them.}
  \label{fig:scaling}\label{fig:stake}
\end{figure}

\noindent Figure \ref{fig:stake} reports the detection power and stake versus
manipulation in panel (b), and the gain in accuracy and detection from a larger
ensemble in panel (a).

\section{Reporter-Stage Robustness under Unknown Covariance}
\label{app:robust}
Section \ref{sec:method} uses mean reports that are truthful at the
flat-prior honest-source equilibrium. The following lemma isolates the limited
robustness property needed below: misspecifying the scored variance does not move
the CRPS-optimal mean away from the conditional mean of the target distribution.

\begin{lemma}[Mean pinning]\label{lem:meanpin}
Let $Y\sim\mathcal N(\mu,\tau^2)$ and fix any $\omega>0$. The expected score
$h(m)=\mathbb E_Y[\mathrm{CRPS}(\mathcal N(m,\omega^2),Y)]$ is strictly convex in
$m$ with the unique minimiser $m=\mu$, independent of $\omega$.
\end{lemma}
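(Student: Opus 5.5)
I will prove Lemma \ref{lem:meanpin} by computing $h$ and its derivatives explicitly rather than appealing to propriety. Propriety alone does not suffice here: the forecast variance $\omega^2$ is fixed and generally differs from $\tau^2$, so the scored distribution can never equal the target law.

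The starting point is the integral form \eqref{eq:crps}. With $F_m(z)=\Phi((z-m)/\omega)$, expanding the square and taking the expectation over $Y$ under Fubini gives
\begin{equation*}
  h(m)=\int_{-\infty}^{\infty}\Big(F_m(z)^2-2F_m(z)\,G(z)+G(z)\Big)\,dz,
\end{equation*}
where $G(z)=\Pr(Y\le z)=\Phi((z-\mu)/\tau)$. Each piece is integrable once the two tails are grouped appropriately. Alternatively, I would use the kernel representation $\mathrm{CRPS}(F,y)=\mathbb E|X-y|-\tfrac12\mathbb E|X-X'|$ with $X,X'\sim F$ independent. The second term does not depend on $m$, since it equals $\omega/\sqrt{\pi}$ for any location. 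Hence
\begin{equation*}
  h(m)=\mathbb E|X-Y|+\text{const},\qquad X-Y\sim\mathcal N\big(m-\mu,\;\omega^2+\tau^2\big),
\end{equation*}
with $X$ independent of $Y$. The right-hand route is the cleaner one, and I will use it.

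Write $\rho=\sqrt{\omega^2+\tau^2}$ and $\delta=m-\mu$. The mean of a folded normal is
\begin{equation*}
  g(\delta)=\mathbb E|\mathcal N(\delta,\rho^2)|=\delta\big(2\Phi(\delta/\rho)-1\big)+2\rho\,\phi(\delta/\rho),
\end{equation*}
which is the same shape as \eqref{eq:crps_gauss}. Differentiating gives $g'(\delta)=2\Phi(\delta/\rho)-1$ and $g''(\delta)=2\phi(\delta/\rho)/\rho>0$. Therefore $h$ is strictly convex in $m$. Its derivative vanishes exactly when $\Phi((m-\mu)/\rho)=\tfrac12$, that is, at $m=\mu$, whatever the value of $\omega$. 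This proves both claims.

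The main obstacle is justifying the kernel representation and the interchange of expectation and integration. I would handle this by citing \cite{gneiting2007strictly} for $\mathrm{CRPS}(F,y)=\mathbb E_F|X-y|-\tfrac12\mathbb E_F|X-X'|$, which holds for any $F$ with finite first moment, and then applying Tonelli to the nonnegative integrand $|X-Y|$ with finite Gaussian moments. As a cross-check, note that $g$ is even in $\delta$, which confirms the minimiser by symmetry independently of the derivative computation.
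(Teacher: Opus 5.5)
Your proof is correct. It reaches the same derivative as the paper, $h'(m)=2\Phi((m-\mu)/\sqrt{\tau^2+\omega^2})-1$, but it integrates first and differentiates second, where the paper does the reverse.

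The paper stays with the closed form \eqref{eq:crps_gauss} and writes $h(m)=\mathbb E_Y[\omega\,g((Y-m)/\omega)]$. It differentiates under the expectation, which is harmless since $|g'|\le 1$. It then evaluates $\mathbb E_Y[\Phi((Y-m)/\omega)]$ by introducing an auxiliary $Z\sim\mathcal N(0,1)$, which turns the expectation into $\Pr(Y-\omega Z\ge m)$. It never writes down $h$ itself.

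You instead use the kernel representation to collapse $h$ to $\mathbb E|X-Y|-\omega/\sqrt{\pi}$. This gives $h$ in closed form as a folded-normal mean with the inflated scale $\rho$, and you then differentiate that one-variable function. The underlying ingredient is the same in both: your $X-Y$ plays exactly the role of the paper's $Y-\omega Z$, a Gaussian with variance $\tau^2+\omega^2$.

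Your route costs an extra cited fact, namely the energy form from \cite{gneiting2007strictly} together with a Tonelli interchange. In return it gives an explicit formula for $h$ and a transparent reading of the lemma. The optimal mean minimises an expected absolute deviation $\mathbb E|m-(Y-\omega Z)|$, so it is the median of $Y-\omega Z$. That median is $\mu$ by symmetry for every $\omega$, and strict convexity comes from that law having an everywhere-positive density. The paper's version is more self-contained, since it needs only the closed form already stated in the text and a bounded-derivative interchange.

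Your opening remark is also accurate. With $\omega$ fixed and generally different from $\tau$, strict propriety alone cannot be invoked, which is why both arguments compute the derivative directly.
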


\begin{proof}
The Gaussian CRPS is
$\mathrm{CRPS}(\mathcal N(m,\omega^2),y)=\omega\,g((y-m)/\omega)$ with
$g(z)=z(2\Phi(z)-1)+2\phi(z)-\pi^{-1/2}$, so $g'(z)=2\Phi(z)-1$. Hence
$h'(m)=-\mathbb E_Y[g'((Y-m)/\omega)]=1-2\,\mathbb E_Y[\Phi((Y-m)/\omega)]$. With an
independent $Z\sim\mathcal N(0,1)$,
$\mathbb E_Y[\Phi((Y-m)/\omega)]=\Pr(m\le Y-\omega Z)
=\Phi\big((\mu-m)/\sqrt{\tau^2+\omega^2}\big)$, so
$h'(m)=2\Phi\big((m-\mu)/\sqrt{\tau^2+\omega^2}\big)-1$ and
$h''(m)=2\phi\big((m-\mu)/\sqrt{\tau^2+\omega^2}\big)/\sqrt{\tau^2+\omega^2}>0$.
Thus $h$ is strictly convex and $h'(m)=0$ exactly at $m=\mu$.
\qed
\end{proof}

\noindent At the reporter stage the target variable in Lemma \ref{lem:meanpin} is the peer observation
$Y_j$, conditional on sensor $i$'s information. The lemma therefore says that
a misspecified scored variance does not move the optimal reported mean away
from the conditional mean
$\mathbb E[Y_j\mid Y_i]$.
In the flat-prior honest-source model this conditional mean is $Y_i$. Hence,
in that baseline case, variance-inflation errors leave the elicited mean pinned
at $m_i=Y_i$. This is the only sense in which the mean report is robust to
variance misspecification: Lemma \ref{lem:meanpin} does not assert that $Y_i$ is optimal under
arbitrary beliefs about source manipulation or under an arbitrary prior. In those
cases the optimal mean is the appropriate posterior-predictive mean.

With known covariance $\Sigma$, sensor $i$'s posterior predictive distribution
for peer $j$ in the diffuse-prior honest-source model is
\begin{equation}
    Y_j\mid Y_i
    \sim
    N\!\left(
        Y_i,\,
        \operatorname{Var}(Y_j-Y_i)
    \right)
    =
    N\!\left(
        Y_i,\,
        \sigma_i^2+\sigma_j^2-2\Sigma_{ij}
    \right).
\end{equation}
Thus the covariance-corrected version of the peer score replaces the independent
inflation in \eqref{eq:peercrps} by the pre-calibrated variance
$\sigma_i^2+\sigma_j^2-2\Sigma_{ij}.$
Equivalently, if the report variance is written as $v_i$, the scored variance is
$v_i+\sigma_j^2-2\Sigma_{ij}$ and at truthful reporting $v_i=\sigma_i^2$.
This variance is positive unless the two sensors coincide almost surely.

If the mechanism instead uses a conservative or misspecified positive variance
inflation, Lemma \ref{lem:meanpin} implies that the elicited mean remains the conditional mean
of $Y_j\mid Y_i$. Therefore, in the flat-prior honest-source model, the mean
input to the WLS estimator remains $Y_i$ even when the variance inflation is
misspecified. Only the variance report is distorted. Section \ref{sec:experiments} confirms this
numerically for the equicorrelated family, where the optimal mean stays within
the search-grid resolution across correlations up to $\rho=0.9$.

For general $\Sigma$, the diversity condition and the impossibility argument
are unchanged because they depend only on the mean design $A=[{\bf 1}\mid
\alpha]$. The efficient estimator is the generalised least squares estimator
\begin{equation}
    (A^\top\Sigma^{-1}A)^{-1}A^\top\Sigma^{-1}m,
\end{equation}
while the diagonal WLS estimator \eqref{eq:wls} remains unbiased for $(\theta,s)$ under
any true covariance because its rows $c_\theta^\top,c_s^\top$ satisfy
\begin{equation}
    c_\theta^\top A=(1,0),\qquad c_s^\top A=(0,1).
\end{equation}
The covariance affects precision through
\begin{equation}
    \operatorname{Var}(\hat s)=c_s^\top\Sigma c_s,
\end{equation}
and similarly for $\hat\theta$. Common-mode covariance can reduce
$\operatorname{Var}(\hat s)$ because $c_s^\top{\bf 1}=0$, whereas covariance
aligned with the contrast direction $c_s$ can increase it. Thus covariance affects
detection power and stake sizing, but not the algebraic identifiability condition
or the unbiasedness identities.

\section{Coalition Manipulation}
\label{app:collusion}

A coalition $C$ of reporters that adds a bias vector $\boldsymbol b$ (supported on
$C$) to its reported means shifts the estimate by
$(\Delta\hat\theta,\Delta\hat s)=(c_\theta^\top\boldsymbol b,\;c_s^\top\boldsymbol b)$,
linear in $\boldsymbol b$, where $c_\theta,c_s$ are the rows of
$(A^\top W A)^{-1}A^\top W$. Because $c_\theta^\top\mathbf 1=1$ and
$c_s^\top\mathbf 1=0$, a coalition that controls every sensor and adds a common bias
$b\mathbf 1$ inflates the credit by the full $b$ while leaving $\hat s$, and by
translation equivariance every peer score, unchanged: the all-network common shift
is the only manipulation that is both undetected and unpenalised, identical to the
uniform-sensitivity failure of Theorem \ref{thm:main}. Any partial coalition
($C\ne\{1,\dots,N\}$) moves its reports away from its peers and pays a strictly
positive peer-CRPS penalty, which falls with coalition size and reaches zero only at
the full network (Figure \ref{fig:collusion}(b)). There is thus no hard tolerance
threshold: truthful reporting degrades gracefully, and a single deviant peer
perturbs an honest sensor's optimal mean by only about $b/(N-1)$
(Section \ref{sec:method}). Sybil-resistant identity, which prevents one party from
controlling the whole ensemble, confines manipulation to the penalised partial
regime.

\begin{figure}[ht]
  \centering
  \begin{minipage}[b]{0.48\linewidth}
    \centering
    \includegraphics[width=\linewidth]{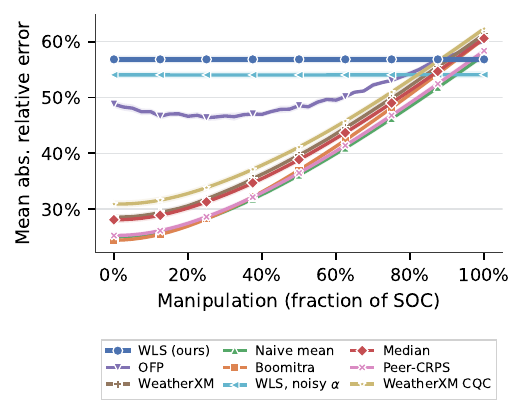}\\[2.5pt]
    {\footnotesize (a)}
  \end{minipage}\hfill
  \begin{minipage}[b]{0.51\linewidth}
    \centering
    \includegraphics[width=\linewidth]{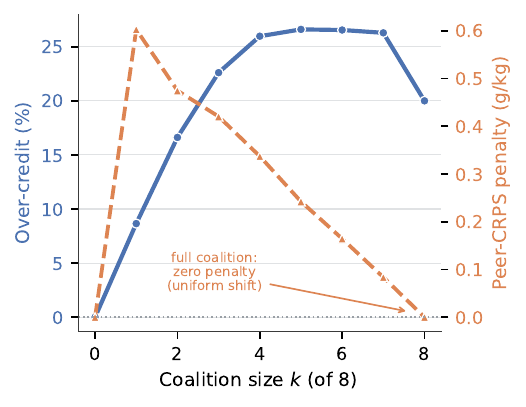}\\[2.5pt]
    {\footnotesize (b)}
  \end{minipage}
  \caption{\textbf{Baselines over-credit; collusion is penalised by the peer score.}
  \textbf{(a)} Absolute error versus manipulation for all methods; the curves converge
  near the noise floor. \textbf{(b)} A coalition of $k$ reporters adds a common bias to
  inflate the credit; every partial coalition pays a peer-CRPS penalty, and only the
  all-eight common shift pays zero, the collusion form of the necessity boundary.}
  \label{fig:baselines}\label{fig:collusion}
\end{figure}

\noindent Figure \ref{fig:baselines}(a) shows the absolute errors of all methods
converging near the noise floor at full manipulation.

\section{Regression Baselines}
\label{app:regression}

A regression that models the source shift as a nuisance parameter, fitting
$(\theta,s)$ on the design $[\mathbf 1\mid\boldsymbol\alpha]$, inherits the
manipulation invariance whatever loss it minimises. Table \ref{tab:regression} fits
$(\theta,s)$ by weighted least squares (the estimator of \eqref{eq:wls}), a Huber
loss, and an $\ell_1$ loss. Each holds a credit bias between $-1.3\%$ and $-1.5\%$ that is constant
across the manipulation range, while the shift-naive aggregators of
Table \ref{tab:baselines} reach $+57\%$. The invariance is therefore a property of
the estimator structure, not of the loss or of the payments. The incentive layer
does a different job: it does not lower the bias, which any shift-modelling
regression already removes, but makes truthful reporting and $s=0$ a strict
equilibrium (Theorem \ref{thm:main}) and supplies the detected $\hat s$ that the
on-chain stake slashes (Appendix \ref{app:stake}).

\begin{table}[ht]
\centering
\caption{Shift-modelling regressions are manipulation-invariant. Credit bias
$\mathbb E[(\hat\theta-\theta)/\theta]$ for estimators that jointly fit
$(\theta,s)$, constant across $0$ to $100\%$ manipulation, against the $+57\%$ of
the shift-naive baselines (Table \ref{tab:baselines}).}
\label{tab:regression}
\setlength{\tabcolsep}{10pt}\renewcommand{\arraystretch}{1.25}
\begin{tabular}{@{}lc@{}}
\toprule
Estimator (fits $\theta$ and $s$) & Credit bias \\
\midrule
\rowcolor{ourstint}
Weighted least squares \eqref{eq:wls} & $-1.3\%$ \\
Huber & $-1.5\%$ \\
$\ell_1$ & $-1.5\%$ \\
\bottomrule
\end{tabular}
\end{table}

\section{Calibration and Deployment}
\label{app:deploy}

\smallskip\noindent\textbf{Sensitivity drift.}
The most common calibration change, a uniform rescaling of all $\alpha_i$ from
sensor ageing, leaves $\hat\theta$ exactly
unbiased, because it does not change the column space of the design
(Section \ref{sec:method}). Only the non-proportional part of a drift biases the
credit, and that bias is bounded by
$|s|\,\|\tilde c_\theta\|\,\|\boldsymbol\Delta\|$ and stays below the baseline
manipulation bias even at a $50\%$ drift (Section \ref{sec:experiments}). The
sensitivities should be re-measured periodically on trusted reference soils rather
than source-controlled sites, so re-calibration opens no manipulation channel.

\smallskip\noindent\textbf{Verifying diversity on-chain.}
The contract holds the published $(\sigma_i,\alpha_i)$ and accepts an ensemble only
when $\operatorname{rank}(A)=2$, equivalently when the precision-weighted variance
$\sum_i w_i(\alpha_i-\bar\alpha_w)^2$ of \eqref{eq:vs} exceeds a floor. Sensor
heterogeneity is then a verifiable procurement requirement rather than an
assumption, and the same quantity sets the detection power and the stake.

\smallskip\noindent\textbf{Limits.}
The measured $\alpha_i$ are a single-survey snapshot. Their stability across seasons
and soil textures, and a validation against paired wet-and-dry field measurements,
need longitudinal data that we leave to deployment.

\section{Extensions}
\label{app:extensions}

\smallskip\noindent\textbf{Multiple manipulation modes.}
With $k$ modes $s\in\mathbb{R}^k_{\ge0}$ and
$Y_i=\theta+\sum_{j=1}^{k}\alpha_i^{(j)}s_j+\varepsilon_i$, the means stack into the
design $A=[\,\mathbf{1}\mid\boldsymbol\alpha^{(1)}\mid\cdots\mid
\boldsymbol\alpha^{(k)}\,]\in\mathbb{R}^{N\times(k+1)}$. The same argument shows
strict truthfulness is achievable if and only if $\operatorname{rank}(A)=k+1$: the
all-ones vector lies outside the span of the manipulation directions and those
directions are linearly independent. This requires $N\ge k+1$ sufficiently diverse
sensors, and Theorem \ref{thm:main} is the case $k=1$. As in the scalar case the
action is one-sided, and a penalty non-decreasing in each $\hat s_j$ makes $s=0$ the
strict best response by the componentwise form of Lemma \ref{lem:monotone}, in line
with the one-sided slashing event. If a sensitivity varies with
the true value or the manipulation level, the design $A$ becomes state dependent and
the characterisation applies to its local linearisation. The real-model results of
Section \ref{sec:experiments} show the invariance tolerates the mild spectral
nonlinearity of wetting.

\smallskip\noindent\textbf{Variance elicitation.}
The peer score \eqref{eq:peercrps} inflates the reported variance $v_i$ by the
common-knowledge peer noise $\sigma_j^2$, so strict propriety sets
$v_i+\sigma_j^2=\sigma_i^2+\sigma_j^2$ and the mechanism elicits calibrated
uncertainty along with the mean. The estimator \eqref{eq:wls} instead weights by the
pre-calibrated $\sigma_i^2$ of Table \ref{tab:benchmark}, not re-elicited on-chain.

\section{Parametric Manipulation Results}
\label{app:parametric}

Figure \ref{fig:bias} shows the manipulation resistance on the parametric
$\theta+\alpha_i s$ model summarised in Section \ref{sec:experiments}: the WLS bias
stays near zero across the manipulation range while the baseline aggregators
accumulate over-crediting, and the absolute error is flat under diverse
sensitivities but grows under uniform ones.

\begin{figure}[ht]
  \centering
  \begin{minipage}[b]{0.49\linewidth}
    \centering
    \includegraphics[width=\linewidth]{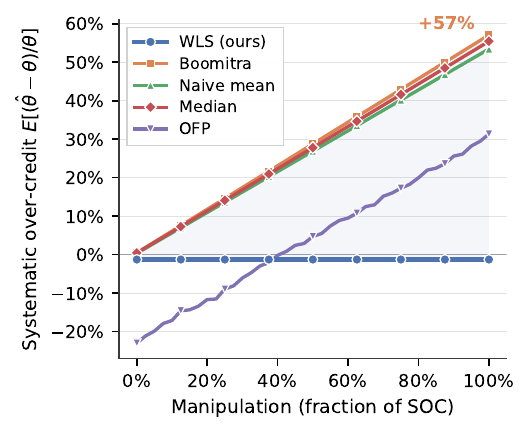}\\[2.5pt]
    {\footnotesize (a)}
  \end{minipage}\hfill
  \begin{minipage}[b]{0.49\linewidth}
    \centering
    \includegraphics[width=\linewidth]{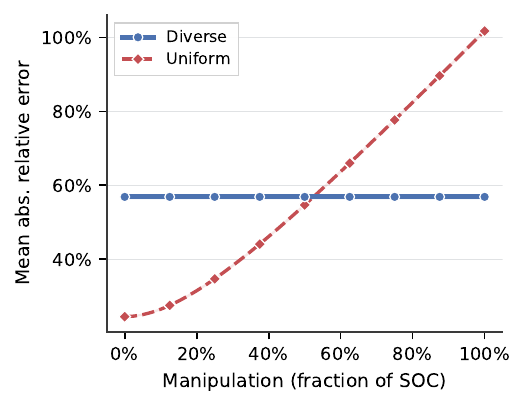}\\[2.5pt]
    {\footnotesize (b)}
  \end{minipage}
  \caption{\textbf{Manipulation resistance.} \textbf{(a)} Systematic bias versus
  manipulation fraction; baselines accumulate over-crediting while WLS stays near
  zero. \textbf{(b)} Absolute error versus manipulation fraction; flat under diverse
  sensitivities, growing under uniform ones.}
  \label{fig:bias}\label{fig:manipulation}
\end{figure}